\documentclass{article}
    \usepackage{graphicx}
    \usepackage{amsmath,amssymb,amsthm,mathtools}
    \usepackage{hyperref}
    \usepackage{xspace}
    \usepackage{authblk}
    
    \title{Sedna: Sharding transactions in multiple concurrent proposer blockchains}
    \author[1]{Alejandro Ranchal-Pedrosa}
    \author[1,2]{Benjamin Marsh\thanks{ben@seinetwork.io}}
    \author[3]{Lefteris Kokoris-Kogias}
    \author[3,4]{Alberto Sonnino}
    \affil[1]{Sei Labs}
    \affil[2]{University of Portsmouth}
    \affil[3]{Mysten Labs}
    \affil[4]{University College London}
    \date{December 2025}
    
    \newcommand{\proposal}{Sedna\xspace}
    \usepackage[T1]{fontenc}
\usepackage[dvipsnames]{xcolor} 
\usepackage{amsmath} 
\usepackage{todonotes}
\usepackage[most]{tcolorbox}
\usepackage{graphicx}
\usepackage{placeins}

\newtcolorbox{arnewremarkbox}{
    colback=violet!5!white,    
    colframe=violet!75!black,  
    fonttitle=\bfseries,
    title=A,
    fontupper=\small\itshape,
    before upper={$\blacktriangleright$\space},
    after upper={\space$\triangleleft$},
}

    \begin{document}
    \newtheorem{definition}{Definition}
    \newtheorem{assumption}{Assumption}
    \newtheorem{proposition}{Proposition}
    \newtheorem{lemma}{Lemma}
    \newtheorem{corollary}{Corollary}
    \newtheorem{theorem}{Theorem}
    \newtheorem{conjecture}{Conjecture}
    \theoremstyle{remark}
    \newtheorem{remark}{Remark}
    
    \newcommand{\V}{\mathcal{V}}
    \newcommand{\Enc}{\mathsf{Enc}}
    \newcommand{\Dec}{\mathsf{Dec}}
    \newcommand{\AONT}{\mathsf{AONT}}
    \newcommand{\Com}{\mathsf{Com}}
    \newcommand{\E}{\mathsf{E}}
    \newcommand{\Hpub}{H_{\mathrm{pub}}}
    \newcommand{\ID}{\mathsf{txID}}
    \newcommand{\htincl}{\mathrm{ht}_{\mathrm{incl}}}
    \newcommand{\htf}{\mathrm{ht}}
    \newcommand{\shares}{\mathsf{shares}}
    \newcommand{\Share}{\mathsf{Share}}
    \newcommand{\Payload}{\mathsf{payload}}
    
    \newcommand{\R}{\mathsf{R}}
    \newcommand{\Ver}{\mathsf{Ver}}
    \newcommand{\M}{\mathsf{M}}
    \newcommand{\symb}{\ell_{\mathrm{sym}}}
    \newcommand{\epsc}{\varepsilon}
    \newcommand{\delc}{\delta_{\mathrm{code}}}
    \newcommand{\Kneed}{K}
    \newcommand{\advmass}{\alpha}
    \newcommand{\Hpre}{H_{\mathrm{pre}}}

    \newcommand{\PiStr}{\Pi_{\mathrm{str}}}
    \newcommand{\PiMDS}{\Pi_{\mathrm{mds}}}
    \newcommand{\PiRTL}{\Pi_{\mathrm{rtl}}}

    \maketitle
    \begin{abstract}
Modern blockchains increasingly adopt multi-proposer (MCP) consensus to remove single-leader bottlenecks and improve censorship resistance. However, MCP alone does not resolve how users should disseminate transactions to proposers. Today, users either naively replicate full transactions to many proposers, sacrificing goodput and exposing payloads to MEV, or target few proposers and accept weak censorship and latency guarantees. This yields a practical trilemma among censorship resistance, low latency, and reasonable cost (in fees or system goodput).

We present \proposal, a user-facing protocol that replaces naive transaction replication with verifiable, rateless coding. Users privately deliver addressed symbol bundles to subsets of proposers; execution follows a deterministic order once enough symbols are finalized to decode. We prove \proposal guarantees liveness and \emph{until-decode privacy}, significantly reducing MEV exposure. Analytically, the protocol approaches the information-theoretic lower bound for bandwidth overhead, yielding a $2$--$3\times$ efficiency improvement over naive replication. \proposal requires no consensus modifications, enabling incremental deployment.
\end{abstract}
    \section{Introduction}
    
    
    
    Multiple concurrent proposer (MCP) consensus mitigates single leader bottlenecks by allowing many validators to propose per slot. This improves bandwidth utilization and removes the acute impact of a slow or offline leader. Yet MCP alone does not settle how a user should disseminate their transaction to proposers, and \emph{dissemination} plays a key role in the user’s experience of latency, price, and censorship exposure.
    
    An important argument for blockchain systems to resort to MCP instead of single proposer consensus protocols is the native tolerance to censorship by a subset of the proposers. In MCP, it is straight-forward for users to send the same transaction to a number of the proposers and obtain censorship resistance through naive replication, whereas the closest approximation of this approach in single proposer systems involves the user sending the same transaction iteratively to the next proposer until it is included. It is also well-known that this replication comes at a significant impact on goodput: if all users want to deterministically tolerate censorship from $c$ proposers without it incurring a cost on latency, they each need to send their respective transactions to at least $c+1$ validators, incurring a worst-case goodput decrease by the replication factor of $c+1$ (e.g., a decrease of $80\%$ in goodput to tolerate just $5$ censoring proposers).
    
    Some systems enforce deduplication by tuning a global parameter that navigates the trade-off between censorship resistance and goodput for all transactions. Beyond system-wide deduplication efforts, a final trade-off remains: how much of the replication bandwidth cost the system absorbs (i.e., how vulnerable it becomes to DoS attacks on goodput) versus how much users pay themselves (i.e., how costly censorship resistance becomes for them). In this sense, goodput impacts can be equated to pricing considerations for users. It is for this reason that we speak of a \emph{trilemma of three user-facing properties}: censorship resistance (in that the transaction eventually gets included), low latency (in that the transaction gets included as soon as there is capacity for it), and reasonable pricing (either in goodput or economic terms, and whether incurred to the system or user).
    
    \paragraph{MEV and dissemination latency.}
    Some users or systems may tolerate losing one of the three properties for specific transactions, but many applications (e.g., trading) require all three, because latency effectively counts as censorship: a brief delay can wipe out an opportunity even if the transaction eventually lands on-chain.
    Even short delays before inclusion are economically exploitable as MEV. Under MCP, the lever extends from in-block ordering to \emph{who} first learns a transaction and \emph{when}. Replicating broadly increases leakage surface; targeting a single proposer is cheaper but trivially censorable. Our goal is to minimize pre-inclusion leakage while keeping byte cost low.
    

    \paragraph{Our approach.}
We present \proposal, a dissemination protocol that replaces whole-transaction replication with \emph{verifiable, ratelessly coded symbols} addressed to a subset of proposers (``lanes''). The sender commits to the payload, derives a transaction identifier~$\ID$, and generates an unbounded stream of small coded symbols, each tied to the transaction via signatures. 
These symbols are packaged into \emph{addressed bundles} and privately delivered to a sampled set of lanes. Inclusion occurs as soon as enough \emph{distinct verified symbols} for~$\ID$ appear in finalized history to reconstruct the payload; at that point the payload is decoded and the transaction is executed. Order is \emph{deterministic}: transactions are sorted by~$(\htincl(\ID),\ID)$, where $\htincl(\ID)$ is the first height at which decoding succeeds and the commitment opening verifies. Crucially, \proposal is entirely user-facing: each sender independently selects encoding parameters to navigate the trilemma according to their transaction's specific requirements, without any protocol-level changes to the underlying MCP consensus.


    \paragraph{Benefits.}
For large payloads, the byte overhead of \proposal approaches
$\frac{1+\varepsilon}{1-c_e/n},$
where $\varepsilon>0$ is the small coding overhead of the rateless scheme (e.g., $5\%$), $n$ is the number of proposers, and $c_e$ is the \emph{effective} number of censoring proposers the user wishes to tolerate. This matches the information-theoretic lower bound for any deterministic coding scheme up to the $(1+\varepsilon)$ factor, and is far below the $m$-fold cost of naive replication. Our evaluation demonstrates that for medium-to-large payloads, \proposal reduces bandwidth overhead by a factor of $2$--$3\times$ compared to naive replication under typical censorship assumptions, with the gap widening as payload size increases. A user selects how many lanes to contact and how many symbols to place in each bundle so as to meet a target per-slot failure probability $\delta$. Because symbols are verifiable and payload-hiding until decode, \proposal provides \emph{until-decode privacy} such the adversary's early decode probability remains near zero for properly chosen parameters, substantially reducing the pre-inclusion MEV surface compared to transparent mempool dissemination. For completeness, we also analyze two simpler submission variants, naive replication and fixed-rate MDS coding, and identify the parameter regimes where each approach is bandwidth-optimal.

    
    \paragraph{Contributions.}
    \begin{itemize}
      \item \textbf{Protocol.} We introduce \proposal, a user-facing submission layer for MCP that replaces whole-transaction replication with commitment-bound, addressed bundles of verifiable coded symbols; it realizes decode-to-include semantics and a deterministic execution order in a lazy-execution setting, while retaining pay-for-bytes admission.
\item \textbf{Correctness.} We prove header/commitment non-malleability and deterministic resolution of symbol indices, and show that monotone decoding yields a unique payload from any \(K\) verified symbols for honestly encoded transactions (where \(K\) is the decode threshold) except with negligible decoding failure probability \(\delc\).
      \item \textbf{Liveness and censorship resistance.} We derive per-slot inclusion bounds and give closed-form, conservative prescriptions for how many lanes to contact (and how many symbols to place per bundle) to achieve a target failure probability \(\delta\) against an assumed effective censor mass \(c_e\).
      \item \textbf{Privacy (and MEV).} We formalize “until‑decode’’ confidentiality, proving that pre‑decode leakage is bounded to the bits contained in revealed symbols and the public header, and we quantify an adversary’s early‑decode probability (including passive eavesdropping). We discuss how these bounds reduce the pre‑inclusion MEV surface and interact with deterministic ordering by \((\htincl(\ID),\ID)\).
      \item \textbf{Cost and comparisons.} We compute exact byte costs including header and per-symbol metadata, derive asymptotic overhead floors approaching \(\tfrac{1+\varepsilon}{1-c_e/n}\), and compare \proposal\ to naive replication and fixed-rate MDS coding, identifying parameter regimes where each approach is bandwidth-optimal, both theoretically and empirically.
    \end{itemize}

    \paragraph{Roadmap.}
    Section~\ref{sec:background} explores the background and related work. Section~\ref{sec:model} introduces the system model.
    Section~\ref{sec:protocol} specifies \proposal.
    Section~\ref{sec:guarantees} proves safety, liveness, and privacy.
    Section~\ref{sec:performance} analyzes byte overhead and parameter selection,
    and Section~\ref{sec:eva} evaluates representative design points. We finally conclude in Section~\ref{sec:conclusion}.
    
    \section{Background and Related Work}
    \label{sec:background}

Recent work establishes that the latency cost of censorship resistance is fundamental: any protocol achieving this guarantee requires $4$ rounds in synchrony and $5$ rounds in partial synchrony, exactly two rounds more than classic leader-based Byzantine Broadcast~\cite{abraham2025latency}. The MCP framework~\cite{garimidi2025mcp} achieves these bounds, making it the canonical architecture for blockchains that take censorship resistance seriously. Our work complements this foundation by addressing the \emph{bandwidth} dimension of the problem: given that MCP-style dissemination is required for strong censorship guarantees, how can users minimize the byte overhead of ensuring their transactions reach sufficiently many proposers?

    \paragraph{MCP approaches.}
    
    A large class of protocols finalize \emph{sets} or vectors of proposals per round or slot and are therefore natural instances of MCP \cite{giridharan2024autobahn, miller2016honeybadger,keidar2021dagrider,danezis2022narwhal,kokoriskogias2022bullshark,gagol2019aleph,baird2016hashgraph}. 
     MCP reduces the \emph{single-leader veto} because multiple proposers can carry a user’s data in a slot. Besides a mention to a direct trade-off between deduplication and censorship resistance~\cite{danezis2022narwhal, stathakopoulou2019mir} through naive replication, these MCP protocols mainly focus on consensus properties assuming transactions are already in the proposers’ mempools; they do not natively optimize the user‑side byte cost of reaching many lanes, nor do they provide quantitative knobs for per‑slot inclusion probability under an assumed censor mass \cite{miller2016honeybadger,danezis2022narwhal,kokoriskogias2022bullshark}. 
    For instance, transactions in the Sui blockchain~\cite{sui2022whitepaper} are naturally duplicated up to 5 times~\cite{sui-code}, which reduces the system’s goodput by a significant factor. Moreover, the system offers no principled mechanism that prevents a user from submitting the same transaction to all validators, and such a submission can cause the transaction to be processed many times.
    
     \paragraph{MCP in production.} Nevertheless, the trilemma is unavoidable, with MCP systems in production taking different approaches to tackle it. The most prevalent approach implies relying on trusted RPC servers relaying transactions in a way that it enforces the deduplication target of the system. This approach, besides being relying on trusted, centralized entities, does not entirely solve the problem of resolving who is to absorb the cost of duplication, users or the system. Sui, Aptos, Red-Belly decide to absorb potential duplications as cost to the system. Hedera instead charges this cost to users, that need to pay for the total cost of any duplication of their transaction, which places the trilemma between latency, censorship resistance and cost in user terms and not system's goodput terms. It is also worth noting that Hedera cannot guarantee full protection from incurred costs on the systems due to the asynchronous execution model, in that a user may incur a cost greater than the funds the system can take from its accounts. Filecoin, also technically an MCP system, does not rely on RPC servers and remains trustless, with a replication factor average ~3-4, and some transactions being duplicated as much as 9 times\footnote{\url{Source: https://filfox.info/en/message/bafy2bzacebyo7g3tugujmshqnxwlqvp3qxl2omxk4bol6arp4ah76lme6nyzg?t=1}}, whose cost is entirely absorbed by the system.
    
    One could argue that some MCP designs pursue alternative strategies to sidestep the trilemma while preserving privacy and low latency. However, in nearly all deployed MCP variants, transparent mempools expose transactions before proposal, which enables pre-inclusion leakage and facilitates classic MEV behaviors such as front-running, sandwich attacks, and strategic reordering \cite{daian2020flashboys}. In effect, MCP enhances the \emph{opportunity} for inclusion, but current implementations still lack a user-facing, bandwidth-aware submission layer that enforces privacy until decode and offers explicitly tunable liveness guarantees.
    
    
    
    \paragraph{Single‑leader systems.}
    
   Single-leader, per-slot architectures concentrate inclusion authority in a single leader for each slot \cite{buterin2020eth2,gabizon2018hotstuff}. To mitigate MEV, Ethereum has deployed a \emph{Proposer-Builder Separation} through MEV-Boost and private relay markets, and is actively investigating in-protocol PBS mechanisms \cite{flashbotsMevBoost,ethereumPBS}. Complementary private order-flow channels, such as Flashbots Protect, bloXroute’s private transactions, and MEV-Blocker, postpone transaction disclosure until the builder phase \cite{flashbotsProtect,bloxroutePrivateTx,mevBlocker}. In parallel, threshold and timelock encryption schemes seek to conceal transaction contents until after ordering completes \cite{shutter2021whitepaper}. 
    
    These mitigations rely on off‑protocol markets and trusted relay behaviors, complicate decentralization (builder concentration), and do not resolve the fundamental \emph{single inclusion bottleneck}: a leader can still censor or strategically delay \cite{flashbotsMevBoost,daian2020flashboys,ethereumPBS}. Moreover, private relays reduce but do not eliminate leakage (e.g., side‑channels, orderflow auctions) and provide no user‑level control over bandwidth amortization across multiple inclusion opportunities.
    
    \paragraph{Order-fairness and encryption.}
    
    Two main lines appear beyond PBS/relays. First, \emph{order‑fairness} protocols (Aequitas~\cite{kelkar2020aequitas}, Themis~\cite{kelkar2023themis}, SpeedyFair~\cite{mu2024speedyfair}) attempt to align ordering with network timing to reduce extractable value from reordering. Second, \emph{encryption‑based} schemes~\cite{rivaseahorse,bormet2025beat,choudhuri2024mempool} hide contents (threshold/timelock encryption, TEEs), often trading off latency/complexity and introducing trust or liveness assumptions \cite{shutter2021whitepaper}. Separately, batch‑auction designs (e.g., frequent batch auctions and on‑chain batch DEXs) reduce the value of marginal reordering \cite{budish2015fba,cow2021}. None of these lines simultaneously addresses (i) the user’s byte‑budget for multi‑lane dissemination, (ii) quantitative, per‑slot liveness against a target censor mass, and (iii) deterministic post‑decode execution semantics.
    
    \paragraph{Verifiable information dispersal.}

Asynchronous Verifiable Information Dispersal (AVID)~\cite{cachin2005avid}
allows a dealer to disperse data to $n$ parties such that any $t+1$ honest
parties can reconstruct, while guaranteeing the dealer cannot equivocate. In
the VSS/VID literature this is often phrased as a \emph{binding} property: by
the time the first honest party completes the dissemination protocol, the
dealer is \emph{bound} to a single value, and cannot later make reconstruction
yield a different (even invalid) value depending on additional information such
as randomness or subsequent blocks.

While AVID provides binding and consistency, meaning every reconstruction
yields the same output, it does not ensure that the dispersed data carries
semantic meaning. Furthermore, classic AVID schemes are overkill for our
setting as they require the dealer to commit to the entire data bundle
beforehand. \proposal\ relaxes this constraint while maintaining a VID-style
binding guarantee through signature-based bundle verification, per-index
deduplication, and a deterministic decoding rule defined over the on-chain
total order of symbols.

Crucially, we use the consensus output to determine which subset of data is
used for decoding, allowing the sender to continue encoding and committing to
new symbols on the fly. This is particularly useful for the rateless encoding
version of \proposal, where the sender can dynamically adapt the number of lanes
used based on how quickly initial symbols are included on-chain. This
flexibility allows \proposal to dynamically balance the required latency, fees,
and censorship resistance. We intentionally avoid proving that dispersed data
is well formed; instead, we replace AVID-style correctness proofs with economic
deterrence, as malicious senders incur bandwidth fees for every published
bundle.

    \paragraph{Replication approaches.}
    At the network layer, Reed–Solomon/MDS codes split a payload of size \(S\) into \(m\) shares such that any \(k\) suffice to reconstruct \cite{reed1960polynomial,wicker1995rs}. In practice, high-throughput deployments and recent prototypes apply erasure-coded, tree-based or multipath fanout at the consensus dissemination layer to accelerate block propagation and improve robustness under partial synchrony \cite{solanaTurbine,chan2023simplex, shoup2023sing, kniep2025solana, danezis2025walrus}. In contrast, LT/Raptor/RaptorQ codes generate an unbounded stream of small symbols with near-linear decoding and tiny overhead \((1+\varepsilon)\) \cite{luby2002lt,shokrollahi2006raptor,rfc6330,rfc5053}. 
    
    In data availability systems, erasure coding and sampling are typically used for \emph{post-proposal} data availability rather than user-side dissemination to multiple proposers \cite{albassam2021lazyledger,eip4844, danezis2025walrus, goren2025shelby}. 
    
    Overall, these mechanisms are promising but operate at the system's consensus dissemination or data availability layers; and they are thus orthogonal to the user-facing submission protocol we propose with \proposal.

    \section{System Model and Preliminaries}
    \label{sec:model}
    
    \paragraph{Consensus.} We fix an integer $n\ge 3$ and a validator set $\V=\{1,\dots,n\}$. Time proceeds in slots $t\in \mathbb{N}$. In each slot, every validator $i\in\V$ proposes a block $B_i^t$; a vector-commit consensus abstraction finalizes, at slot height $t$, an ordered $n$-tuple $(B_1^t,\ldots,B_n^t)$. We refer to the conceptual stream of proposals from validator $i$ as a lane. We assume execution is \emph{lazy} that is to say that the system commits to data and order first, and only executes transactions after finality has been reached.
    
    \paragraph{Network.} We assume a partially synchronous network model, with the adversary initially being able to delay messages arbitrarily. There is an unknown Global Stabilization Time (GST) after which all messages sent by honest parties are guaranteed to be delivered by honest parties within a fixed bound $\Delta$. The liveness of the underlying vector-consensus abstraction (i.e., its ability to finalize new blocks) depends on this assumption. Safety properties hold at all times, while liveness properties are guaranteed to hold after GST. Our analysis assumes that bundles have already been delivered to target validators' mempools; the mechanics of user-to-validator dissemination are orthogonal to the protocol's core guarantees.
    
    \paragraph{Good case.}
    We additionally reason about particular properties of inclusion of a transaction submitted by an honest user post-GST. In this ``good case'', to isolate consensus-layer liveness, we assume that the user's bundles (sent at or before $t-\Delta$) have been successfully delivered to the target validators' mempools, and that validators have sufficient blockspace for all delivered, valid bundles. This removes fee-market and block-packing effects from the analysis, allowing us to evaluate the protocol's intrinsic properties of censorship resistance, low latency, and reduced duplication factor. These assumptions are not required for correctness.

    \paragraph{Transaction format.}
A transaction is a pair $tx=(\Hpub,\Payload)$, where $\Payload\in\{0,1\}^{S_{\mathrm{pl}}}$ is the payload and $\Hpub$ is a public header carrying the fee/accounting metadata together with a commitment to the payload. The sender samples commitment randomness~$\sigma$ and computes a commitment
\[
  C = \Com(\sigma,\Payload).
\]
Let $M := (\sigma \parallel \Payload)$ be the message dispersed by the code, of total length
\[
  S := |\sigma| + S_{\mathrm{pl}}.
\]
The value $\sigma$ is included as a prefix of the data to be encoded, i.e., the encoder operates on $(\sigma \parallel \Payload)$. We fold $|\sigma|$ into $S$ for notational simplicity. This ensures $\sigma$ becomes available upon successful decoding, enabling validators to verify the commitment opening.
For the remainder of the paper we abuse notation slightly and refer to $S$ as the
“payload size”, since $|\sigma|$ is fixed and negligible compared to $S_{\mathrm{pl}}$
in our parameter regimes. We write
\[
  \Hpub = (\Hpre,C,\Sigma),\qquad
  \ID = H(\Hpre \parallel C),\qquad
  \Sigma = \mathrm{Sig}_{\mathsf{sk}}(\ID).
\]
We call $\Hpre$ the \emph{preimage header}: a deterministically serialized tuple of
fee/accounting fields required for mempool admission and bandwidth pricing.
    \paragraph{MDS coding.} Our \proposal protocol comes in three different variants depending on the technique used for replication: naive, MDS and rateless. As a baseline fixed-rate scheme, we consider $(m, k)$ Maximum Distance Separable (MDS) codes. The sender encodes the payload $S$ into $m$ shares, each approximately $S/k$ in size. Successful reconstruction of the original payload requires the recipient to gather any $k$ distinct shares.
    \paragraph{Rateless, verifiable coding.}
The main variant of our protocol employs a \emph{rateless, verifiable encoder}~$\R$.
On input the message $M := (\sigma \parallel \Payload)$, $\R$ produces an unbounded
sequence of coded symbols
\[
  (y_1, y_2, \ldots), \qquad y_j \in \{0,1\}^{\symb}.
\]
For clarity, we write the $j$-th symbol generator induced by~$\R$ as~$R_j$.
Let $M \in \{0,1\}^S$ denote the (fixed) input message. Each $R_j$ is a deterministic function
\[
  R_j : \{0,1\}^* \to \{0,1\}^{\ell_{\mathrm{sym}}},
\]
such that $y_j = R_j(M)$,
i.e., $y_j$ is the $j$-th coded symbol produced by~$\R$ (with any required
per-symbol coefficient information derived deterministically from $j$ and public parameters).

For a fixed transaction identifier $\ID$, any verified pair $(j,y_j)$ that appears in a bundle with $\Ver(\ID,i,J_i,\{y_j\}_{j\in J_i},\Sigma_i)=1$ is cryptographically bound to that transaction via the sender's signature.


We additionally assume that for any fixed $\Payload$, the collection of symbols $\{y_j\}$ behaves as (pseudo)random linear combinations of the payload blocks, so that any set of $r$ symbols reveals at most $r\,\ell_{\mathrm{sym}}$ bits of information about~$\Payload$. This property underpins our privacy analysis in Section~\ref{sec:guarantees}.

    From any $\Kneed$ verified distinct symbols (for indices $j$ of the verifier's choice),
$\Dec$ reconstructs $M = (\sigma, \Payload)$ with probability at least $1-\delc$. We parameterize
    \[
      \Kneed \ \stackrel{\mathrm{def}}{=}\ \left\lceil (1+\epsc)\,\frac{S}{\symb}\right\rceil ,
    \]
    where $\epsc>0$ is a small overhead (e.g., $5\%$) and $\delc$ the residual decoding failure probability under the chosen rateless code family (e.g., LT/Raptor/RaptorQ).

    \paragraph{Sender model.} Senders can customize their desired trade-off between latency, cost, and censorship-resistance for each transaction. Sender $s \in \mathcal{S}$ has a specific censorship-resistance requirement for each transaction $tx_s$, expressed in the number of censoring validators $c(tx_s)$ the inclusion of transaction $tx_s$ must tolerate. In particular, it is clear that at least $c(tx_s)+1$ validators must be reached, $0\leq c(tx_s)\leq n-1$, to deterministically tolerate censoring from up to $c(tx_s)$ validators. We abuse notation by referring to $c$ in the remainder of this document. Intuitively, for the naive approach of re-submitting the full transaction payload and metadata to $c$ different validators, $c+1$ becomes the replication factor of the transaction. Thus, while $f$ is a system parameter of Byzantine fault tolerance, $c$ is a user parameter of censorship resistance. 
    
    \paragraph{Effective censorship.} Given $f, c$ we define $c_{e}(f,c)$ as the effective number of censored lanes. We abuse notation by referring to $c_{e}$. Typically, for $c\leq f$ then the adversary is not strong enough to selectively censor particular lanes, meaning that it can only censor the transaction in the lanes that it controls directly, hence $c_e=c$. For $c>f$, the adversary can additionally prevent progress on correct lanes, as finalization requires $\lfloor(n+f)/2\rfloor+1=2f+1$ votes (in the $n=3f+1$ model), but since it is however possible for all protocols to enforce at least $n-f$ lanes must be decided in any iteration of MCP, then $c_e=n-(2f+1)+c$. Note that in this case $c_e=n$ for $c>2f$ and thus the strongest adversary tolerable is $c=2f$. Nonetheless, as the concrete power of the adversary to censor lanes outside of its direct control is dependent on the particular instance of the consensus implementation, we simply refer to $c_e$ in the remainder of this document.
    
    \section{\proposal Protocol}
    \label{sec:protocol}
    We now present the \proposal protocol. At a high level, the sender (1) commits to the transaction payload and derives a unique identifier $\ID$, (2) generates an unbounded stream of small, verifiable coded symbols, (3) packages disjoint subsets of symbols into bundles addressed to a sampled set of validator lanes, and (4) privately delivers these bundles. Validators verify incoming bundles by checking hash consistency, the header signature, the local accounting predicate, and the bundle signature. They do \emph{not} verify that symbols are correctly encoded with respect to the commitment; malformed or inconsistent symbols are handled economically (fees are charged) and filtered by post-decode transaction validity checks.

    \paragraph{Commitment and addressed shares.}
Using the commitment randomness $\sigma$ and commitment $C=\Com(\sigma,\Payload)$ from Section~\ref{sec:model}, the sender obtains the identifier $\ID = H(\Hpre \parallel C)$ and header $\Hpub=(\Hpre,C,\Sigma)$. For a chosen set of symbol indices $j$, the sender forms addressed shares
\[
  \Share_{i,j}=(\ID,i,j,y_j,\Hpub).
\]
    Validators maintain coded mempools of addressed bundles that passed local checks and bytes-fee payment. A block $B_i^t$ is a sequence of such addressed bundles assembled by validator $i$.

    \proposal may prescribe more than one share to be sent to the same validator. As a result, we aggregate a set of shares with the same validator $v$ as recipient $\{\Share_{i,j}\}_{i=v}$ into a bundle:
    \[
      Bundle_i=\bigl(\ID,\,i,\,J_i,\,\{y_j\}_{j\in J_i},\,\Sigma_i,\,\Hpub\bigr),
    \]
    where $J_i$ is the set of symbol indices carried in the bundle and
\[
  \Sigma_i = \mathrm{Sig}_{\mathsf{sk}}\!\bigl(\ID \parallel i \parallel \langle j, y_j\rangle_{j \in J_i}\bigr)
\]
is the sender's signature binding all symbols $y_j$ to their indices $j$ and lane $i$ for transaction $\ID$, where $\langle j, y_j \rangle_{j \in J_i}$ denotes the list of index–symbol pairs sorted by increasing $j$.

For a bundle addressed to lane $i$ with index set $J_i$ and symbols $\{y_j\}_{j\in J_i}$, we write
\[
  \Ver\!\bigl(\ID,i,J_i,\{y_j\}_{j\in J_i},\Sigma_i\bigr)=1
\]
to mean that the bundle signature verifies, i.e.,
\[
  \mathrm{Verify}_{\mathsf{pk}}\!\bigl(\ID \parallel i \parallel \langle j, y_j\rangle_{j \in J_i}, \Sigma_i\bigr) = 1.
\]
For an honest sender and any external adversary that does not know $\mathsf{sk}$, EUF-CMA security implies that any bundle with $\Ver=1$ must have been explicitly produced by that sender; an external adversary cannot alter indices or symbol values without invalidating the signature.


    
\paragraph{Verifying shares/bundles.}
Upon receiving a bundle addressed to lane $i$ with index set $J_i$ and symbols $\{y_j\}_{j\in J_i}$, a validator performs the following checks:
\begin{enumerate}
  \item \emph{Hash consistency:} recompute $\ID' := H(\Hpre\parallel C)$ from the header and check that $\ID'=\ID$.
  \item \emph{Header signature:} $\mathrm{Verify}_{\mathsf{pk}}(\ID,\Sigma)=1$.
  \item \emph{Accounting/fees:} the admission and bandwidth-pricing predicate on $\Hpre$ is satisfied.
  \item \emph{Bundle signature:} $\mathrm{Verify}_{\mathsf{pk}}(\ID \parallel i \parallel \langle j, y_j\rangle_{j \in J_i}, \Sigma_i)=1$, i.e., $\Ver(\ID,i,J_i,\{y_j\}_{j\in J_i},\Sigma_i)=1$ as in Definition~\ref{def:ver-relation}.
\end{enumerate}
A bundle is admitted to the mempool only if all four checks pass.


    \paragraph{Cross-bundle equivocation.}
Bundle-level signatures bind symbols within a single bundle but do not cryptographically prevent a malicious sender from signing different values for the same index $j$ in bundles addressed to different lanes. However, this equivocation is harmless: per-index deduplication (Section~\ref{sec:guarantees}) ensures that only the first verified symbol for each $(\ID, j)$ pair in finalized history is considered, so all honest validators agree on a unique value for each index. Moreover, if a sender equivocates, the resulting symbol set is very unlikely to correspond to a meaningful payload: decoding will typically either fail or yield an invalid transaction, which is discarded while bandwidth fees remain charged. In either case, equivocation is economically irrational.

\paragraph{Deduplication order.}
The finalized ledger at height $h$ consists of an ordered sequence of block vectors $(B_1^t, \ldots, B_n^t)$ for $t \le h$. We scan this sequence in increasing height $t$, then by lane index $i \in \{1, \ldots, n\}$, then by intra-block position. The first occurrence of any $(\ID, j)$ in this order determines the unique symbol value $y_j$ used for decoding; subsequent occurrences are ignored.

    
    
    
    
    
    \paragraph{Lane sampling.}
    For each transaction, the sender samples a subset $U\subseteq \V$ of size $0<m\leq n$ and privately delivers one addressed bundle to each lane in $U$. For an honest sender, the index sets $J_i$ across lanes are chosen to be disjoint, so that each published bundle
    contributes $s$ previously unseen indices.

  \paragraph{Inclusion, decoding, and execution order.}
For a fixed identifier $\ID$, let $X_{\ID}(h)$ denote the set of deduplicated verified index–symbol pairs $(j,y_j)$ for $\ID$ that appear in finalized history up to height $h$. Decoding is attempted whenever $|X_{\ID}(h)|\ge \Kneed$. Upon successful decoding, validators recover $(\sigma, \Payload)$ and verify the commitment opening:
\[
  C \stackrel{?}{=} \Com(\sigma, \Payload).
\]
If verification succeeds, the transaction is included. We define the \emph{inclusion height}, denoted $\htincl(\ID)$, as the minimal block height $h$ at which both decoding succeeds and the commitment opening verifies. The final, deterministic execution order is then established by sorting all included transactions lexicographically based on the tuple $(\htincl(\ID), \ID)$.

If a decoding attempt fails at some $h$ (an event of probability at most $\delc$ for an honestly encoded transaction once $|X_{\ID}(h)|\ge \Kneed$), the protocol waits for additional verified symbols and retries at the next height where new symbols are available. A transaction that fails post-decode verification (e.g., commitment opening fails, or application-level validity checks fail) is discarded; bandwidth fees remain charged.

    
    \section{\proposal Guarantees}
    \label{sec:guarantees}
    
    We now establish the main correctness, liveness, and privacy properties of \proposal. Unless stated otherwise, we work post-GST under the network and consensus assumptions of Section~\ref{sec:model}, and we consider the rateless
    variant defined in Section~\ref{sec:protocol}. 
    Throughout this section we assume that the commitment scheme $\Com$ is 
computationally binding and hiding, that each symbol generator $R_j$ is 
deterministic, that the hash function $H$ is collision resistant, and that the signature scheme is EUF-CMA secure.
    
    
    We first formalize the verification relation for bundles and the decoder requirements.
    \begin{definition}[Verification relation]
\label{def:ver-relation}
For a bundle addressed to lane $i$ with index set $J_i$ and symbols $\{y_j\}_{j\in J_i}$, we write $\Ver(\ID,i,J_i,\{y_j\}_{j\in J_i},\Sigma_i)=1$ iff the bundle signature verifies:
\[
  \mathrm{Verify}_{\mathsf{pk}}\!\bigl(\ID \parallel i \parallel \langle j, y_j\rangle_{j \in J_i}, \Sigma_i\bigr) = 1,
\]
where $\langle j, y_j \rangle_{j \in J_i}$ denotes index–symbol pairs sorted by increasing $j$. By EUF-CMA security, any bundle with $\Ver=1$ was produced by the holder of $\mathsf{sk}$.
\end{definition}
    \begin{definition}[Inclusion Height]
\label{def:incl-height}
For a fixed transaction identifier $\ID$ and ledger height $h$, let
\[
X_{\ID}(h)
  = \Bigl\{
      (j, y_j) \ \Big|\ 
      \begin{aligned}[t]
        &\exists\,\text{a finalized bundle }B\text{ with }\htf(B)\le h\\
        &\text{whose deduplicated contribution for }(\ID,j)\text{ is }y_j
      \end{aligned}
    \Bigr\}.
\]
Decoding is attempted whenever $|X_{\ID}(h)| \ge K$.  
The inclusion height of $\ID$, denoted $\htincl(\ID)$, is
\[
  \htincl(\ID)
    \;=\;
    \min\Bigl\{\, h \ \Big|\ 
    \begin{aligned}[t]
      &|X_{\ID}(h)| \ge K,\ 
        \Dec(X_{\ID}(h)) = (\sigma, \Payload),\\
      &\text{and } C = \Com(\sigma, \Payload)
    \end{aligned}
    \Bigr\}.
\]
If decoding fails or commitment verification fails at height $h$, the system waits for additional verified symbols and retries at the next height where new symbols are available.
\end{definition}

    \begin{definition}[Monotone decoding for honest encodings]
\label{def:monotone-dec}
The decoder $\Dec$ is \emph{monotone for honestly encoded transactions} if, for any two index sets $J_1,J_2$ with $|J_1|,|J_2|\ge K$, when the corresponding symbols are generated by the same honest sender as $y_j=R_j(M)$, either $\Dec$ outputs the same $M$ (containing payload) on both sets, or $\Dec$ fails on at least one. Any residual disagreement probability is absorbed into~$\delc$.
\end{definition}
    
We work under the validity and deduplication rules of Section~\ref{sec:protocol}. Recall that deduplication affects symbol selection, not fee accounting: duplicate bundles that reach finalized history still incur bandwidth fees.
Note that verification does not enforce correct encoding; a malicious sender may sign arbitrary symbols. Garbage payloads are deterred economically, as fees are charged regardless of whether decoding succeeds.

    
    \subsection{Safety and Correctness}
We establish the core safety properties of \proposal: that for any finalized ledger the set of symbols used for each index is deterministically resolved, that headers and commitments cannot be malleated, and that decoding yields a unique payload. Together these ensure that the execution order $(\htincl(\ID),\ID)$ is a deterministic function of the finalized ledger.

\begin{lemma}[Validity and Deterministic Resolution]
\label{lem:validity}
If a finalized block contains a bundle with $\Ver=0$, the block is invalid and must be rejected. Moreover, while a malicious sender may produce conflicting verified bundles for the same $\ID$, the protocol's deduplication rule ensures that for any fixed finalized ledger, the set of accepted symbols is uniquely determined.
\end{lemma}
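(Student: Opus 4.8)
The plan is to split the statement along its two clauses and reduce each to the explicit admission and deduplication rules of Section~\ref{sec:protocol} together with the agreement property of the underlying vector-commit consensus; no new machinery is needed.

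\textbf{First clause.} I would recall that block validity is defined by the four admission checks of Section~\ref{sec:protocol}, the last of which is the bundle-signature test $\Ver(\ID,i,J_i,\{y_j\}_{j\in J_i},\Sigma_i)=1$ of Definition~\ref{def:ver-relation}, and that honest validators re-run this predicate on every bundle of every proposed block before voting. A block carrying a bundle with $\Ver=0$ therefore fails the validity predicate at every honest validator, which is exactly the ``is invalid and must be rejected'' conclusion. For the stronger reading that such a block is never finalized, I would add the standard quorum argument: under the $n=3f+1$ assumption, finalizing a block vector at height $t$ requires $2f+1$ votes, at least $f+1$ of which come from honest validators, who never vote for an invalid block; hence no finalized block contains a $\Ver=0$ bundle.

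\textbf{Second clause.} I would fix a finalized ledger, which by consensus agreement is a single common sequence of block vectors $(B_1^t,\dots,B_n^t)_{t\le h}$ seen identically by all honest validators. The deduplication scan of Section~\ref{sec:protocol} orders all occurrences of index–symbol pairs by height, then lane index, then intra-block position; since each coordinate is totally ordered within the previous level and there are no ties, the scan is a strict total order on these occurrences. Within a single $\Ver=1$ bundle the map $j \mapsto y_j$ is already a function, because $J_i$ is a set (each index appears once) and the sorted list $\langle j, y_j\rangle_{j\in J_i}$ is signed as a unit. The only possible cross-bundle ambiguity is sender equivocation — two verified bundles carrying $(\ID,j,y_j)$ and $(\ID,j,y'_j)$ with $y_j\neq y'_j$ — and the ``first occurrence in scan order wins'' rule resolves it deterministically. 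Hence $X_{\ID}(h)$ of Definition~\ref{def:incl-height} is a deterministic function of the finalized ledger and identical across honest validators; combined with the fact that $\Dec$ and the commitment-opening check are deterministic given $X_{\ID}(h)$ (residual decoder disagreement being absorbed into $\delc$ via Definition~\ref{def:monotone-dec}), this also yields determinism of $\htincl(\ID)$ and of the execution order $(\htincl(\ID),\ID)$.

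\textbf{Main obstacle.} I do not anticipate a substantive difficulty: the lemma is essentially an unpacking of the protocol's admission and deduplication rules plus consensus agreement. The one point that needs care is the bridge in the first clause from a \emph{local} mempool check to a \emph{consensus-level} guarantee, i.e., making explicit the (standard) assumption that honest validators re-verify bundle signatures when validating others' proposed blocks, so that a Byzantine proposer cannot smuggle a $\Ver=0$ bundle into a finalized block. Everything else is routine verification that each step is a deterministic function of the fixed ledger.
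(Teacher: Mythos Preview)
Your proposal is correct and follows the same line as the paper's own proof, which is essentially a two-sentence version of your argument: consensus validity forces $\Ver=1$ on every included bundle, and the first-occurrence deduplication rule over the finalized total order pins down a unique symbol per $(\ID,j)$. Your second clause over-proves slightly by extending into the determinism of $\htincl(\ID)$ and of the execution order, which in the paper is deferred to Theorem~\ref{thm:order}; for this lemma you only need to stop once $X_{\ID}(h)$ is shown to be a deterministic function of the fixed ledger.
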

\begin{proof}
Consensus validity requires $\Ver=1$ for all included bundles. Conflicting symbols for the same $(\ID, j)$ are resolved by per-index deduplication: only the first occurrence in finalized order is used.
\end{proof}

    
    
    \begin{lemma}[Header and Commitment Non-Malleability]
    \label{lem:header-nm}
    Let $\ID = H(\Hpre\parallel C)$ and $\Sigma=\mathsf{Sig}_{\mathsf{sk}}(\ID)$. Assume $H$ is collision resistant and the signature scheme is EUF-CMA secure. Then, given a valid pair $(\Hpre,C)$ and $(\ID,\Sigma)$, no PPT adversary can produce $(\Hpre',C')\neq(\Hpre,C)$ with the same identifier $\ID$ and a valid signature under the sender's key, except with negligible probability.
    \end{lemma}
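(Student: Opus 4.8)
The plan is a standard two-case reduction: any successful malleation is turned into either a collision for $H$ or an existential forgery against the signature scheme, the split being on whether the adversary reuses the original identifier. Concretely, I would set up an experiment in which the adversary $\mathcal{A}$ is handed a genuine transcript $\bigl((\Hpre,C),(\ID,\Sigma)\bigr)$ produced by the honest sender (together with the sender's other signatures, namely the bundle signatures $\Sigma_i$), and outputs $(\Hpre',C')\neq(\Hpre,C)$ with a signature $\Sigma'$ that verifies on $\ID' := H(\Hpre'\parallel C')$. The statement as written concerns the branch $\ID'=\ID$; I keep the branch $\ID'\neq\ID$ in the analysis as well, since it is the inclusion of ``valid signature under the sender's key'' in the hypothesis that forces the adversary into the signature game, and a meaningful ``same transaction'' must reuse $\ID$ anyway.

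Case 1 ($\ID'=\ID$). Then $H(\Hpre'\parallel C')=H(\Hpre\parallel C)$. Because the serialization $\Hpre\parallel C$ is an unambiguous (injective) encoding of the pair — fixed-width $C$ concatenated with a deterministic serialization of $\Hpre$, so the bit string determines $(\Hpre,C)$ — the hypothesis $(\Hpre',C')\neq(\Hpre,C)$ yields $\Hpre'\parallel C'\neq\Hpre\parallel C$ as bit strings. Thus a reduction $\mathcal{B}_1$ that runs $\mathcal{A}$ on a freshly sampled honest transcript and outputs the two distinct preimages wins the collision game for $H$ with exactly the probability $\mathcal{A}$ succeeds in this case, which is negligible by collision resistance.

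Case 2 ($\ID'\neq\ID$). Here $\Sigma'$ verifies under $\mathsf{pk}$ on the message $\ID'$. I build a forger $\mathcal{B}_2$ playing the EUF-CMA game: it receives $\mathsf{pk}$, produces the honest sender's transcript by querying its signing oracle on $\ID$ (for $\Sigma$) and on each $\ID\parallel i\parallel\langle j,y_j\rangle_{j\in J_i}$ (for the $\Sigma_i$), runs $\mathcal{A}$, and outputs $(\ID',\Sigma')$. Since $\ID'\neq\ID$ and — with length-prefixed fields or an explicit domain-separation tag distinguishing ``header'' messages from ``bundle'' messages — $\ID'$ is also distinct from every bundle message queried, $\ID'$ is a fresh message and $(\ID',\Sigma')$ is a valid existential forgery; EUF-CMA security bounds this probability by a negligible quantity. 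A union bound over the two exhaustive, mutually exclusive cases gives the lemma.

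The only place where care is genuinely needed is encoding hygiene: one must ensure that $\parallel$ denotes an injective pairing (so that a hash collision on concatenations really is a collision on the underlying $(\Hpre,C)$) and that the message spaces for $\Sigma$ and for the $\Sigma_i$ are separated by length or tag (so that reusing a bundle signature cannot accidentally verify on a new bare identifier in Case 2). Both hold under the deterministic, length-disciplined serialization the protocol already assumes, and I would state this explicitly as the single implicit assumption the reduction relies on; the rest is a textbook pair of reductions.
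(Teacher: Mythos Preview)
Your proposal is correct and takes essentially the same approach as the paper: a two-case reduction splitting on whether the adversary's identifier equals the original $\ID$, landing in collision resistance in the first case and EUF-CMA in the second. Your write-up is in fact more careful than the paper's terse proof, since you make explicit the encoding-injectivity assumption needed for Case~1 and the domain-separation assumption needed to ensure the forged $\ID'$ in Case~2 is fresh relative to the bundle-signature queries; the paper's proof leaves both implicit.
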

    
    \begin{proof}
    If $(\Hpre',C')\neq(\Hpre,C)$ and $H(\Hpre'\parallel C') = H(\Hpre\parallel C)$, then $H$ has a collision. If the
    adversary instead changes $\Hpre'$ but reuses $\Sigma$ on the same $\ID$, it must either break the binding between $(\Hpre,C)$ and $\ID$ (again causing a hash collision) or forge a signature under the sender's key, violating
    EUF‑CMA security.
    \end{proof}

\begin{lemma}[Unique decode for honest senders]
\label{lem:soundness}
Assume the sender is honest and encodes as $y_j = R_j(M)$ for all $j$,
where $M = (\sigma \parallel \Payload)$.
Assume $\Dec$ outputs $M$ from any set of at least $K$ correctly encoded index-symbol pairs with probability at least $1-\delc$.
Then any two sets $X_1,X_2$ of correctly encoded index–symbol pairs with $|X_1|,|X_2|\ge K$ decode to the same payload,
except with probability at most $\delc$.
\end{lemma}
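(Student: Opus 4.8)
The plan is to exploit the \emph{structure} of honest encodings rather than treating $\Dec$ as a pure black box, so that the two decode events are controlled by a single $\delc$-term instead of an avoidable union bound. Fix the honest message $M=(\sigma\parallel\Payload)$. By hypothesis every pair appearing in $X_1\cup X_2$ has the form $(j,R_j(M))$, so under the modeling assumption of Section~\ref{sec:model} (the symbols $\{y_j\}$ behave as (pseudo)random linear combinations of the payload blocks) both $X_1$ and $X_2$ induce consistent linear systems for which the true $M$ is a solution. Whenever $\Dec$ returns a value on such a set (i.e.\ does not report failure), that value is \emph{the} solution of the induced system and hence equals $M$. This already yields the lemma with probability one on the event that neither $X_1$ nor $X_2$ triggers a decode failure.

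First I would formalize the above: for $i\in\{1,2\}$ write $F_i$ for the event ``$\Dec(X_i)$ reports failure''; on $\overline{F_1}\cap\overline{F_2}$ we have $\Dec(X_1)=\Dec(X_2)=M$, so the recovered payloads coincide. Next I would bound the bad event using the monotone-decoding property of Definition~\ref{def:monotone-dec}, specialized to honestly encoded symbols with index sets of size $\ge K$: the residual probability (over the choice of which symbol indices populate $X_1,X_2$) that the dichotomy ``same $M$ on both, or failure on at least one'' is violated is at most $\delc$ by construction. Since the event ``$X_1$ and $X_2$ both decode successfully to distinct payloads'' is contained in this residual event, it has probability at most $\delc$. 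Finally I would note that the remaining case is benign for the protocol: by Definition~\ref{def:incl-height} a failed decode merely triggers a retry at a later height with a larger symbol set, so the operative statement---any two \emph{successful} decodes of $\ge K$ honestly encoded symbols return the same $\Payload$---holds except with probability at most $\delc$. Together with Lemma~\ref{lem:header-nm} and Lemma~\ref{lem:validity}, this makes $(\htincl(\ID),\ID)$ a deterministic function of the finalized ledger.

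The main obstacle is getting the constant in front of $\delc$ right. A naïve argument bounds $\Pr[F_1]\le\delc$ and $\Pr[F_2]\le\delc$ separately and concludes via a union bound with $2\delc$; recovering the stated $\delc$ requires the observation that a \emph{successful} decode of honestly generated symbols is \emph{forced} to output $M$, so the only relevant uncertainty is the single ``some considered $\ge K$-subset fails to decode'' event, which the code family's guarantee---as packaged by Definition~\ref{def:monotone-dec}---bounds by $\delc$. A secondary subtlety is that this step is not fully black-box: it invokes the (pseudo)random-linear-combination modeling assumption of Section~\ref{sec:model} for the ``unique solution $=M$'' claim. If one insists on a purely black-box decoder the bound degrades to $2\delc$, which can be re-absorbed into the definition of $\delc$ without affecting any downstream statement.
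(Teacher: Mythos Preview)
Your argument is correct and lands on the same core observation as the paper: since the sender is honest and each $R_j$ is deterministic, every correctly encoded pair $(j,y_j)$ equals $(j,R_j(M))$, so $X_1$ and $X_2$ are both subsets of the single codeword determined by $M$, and the decoder guarantee then forces agreement up to its residual failure probability. The paper's proof is exactly this, in three lines: determinism of $R_j$, Lemma~\ref{lem:validity} to resolve any duplicated indices, and the decoder assumption.

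Where you diverge is in the scaffolding. Your first paragraph routes through the (pseudo)random-linear-combination modeling assumption to argue that a successful decode is \emph{forced} to return $M$; the paper never uses that assumption here (it introduces it only for the privacy bound of Lemma~\ref{lem:until-K-privacy}), and nothing in the stated decoder contract guarantees that $\Dec$ solves a linear system rather than, say, running belief propagation. You correctly flag this as a non-black-box step, but it is unnecessary: the paper simply applies the hypothesis ``$\Dec$ outputs $M$ from any $\ge K$ honest symbols with probability $\ge 1-\delc$'' directly. Your careful $\delc$-versus-$2\delc$ discussion and explicit appeal to Definition~\ref{def:monotone-dec} are more rigorous than the paper's one-line conclusion, which silently absorbs any union-bound slack into $\delc$; your treatment makes explicit what the paper leaves tacit.
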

\begin{proof}
Each $y_j = R_j(M)$ is unique by determinism. By Lemma~\ref{lem:validity}, deduplication fixes a unique symbol per index. Thus any two sets of $K$ verified symbols are subsets of the same codeword, decoding to the same payload except with probability $\delc$.
\end{proof}

For a malicious sender who signs inconsistent or garbage symbols, decoding may fail or produce an invalid transaction. In this case, the transaction is discarded but bandwidth fees (prepaid in $\Hpub$) remain charged. This fee mechanism deters garbage submissions without requiring proofs of correct encoding.

    
    For a fixed $\ID$ and height $h$, we write $X_{\ID}(h)$ for the set of distinct verified index–symbol pairs as in Definition~\ref{def:incl-height}. Per-index deduplication ensures that duplicates do not change $X_{\ID}(h)$.
    
    \begin{theorem}[Order Determinism]
    \label{thm:order}
    Let $\htincl(\ID)$ be the inclusion height of~$\ID$ as in Definition~\ref{def:incl-height}. Then the execution order obtained by sorting all included transactions lexicographically by $(\htincl(\ID),\ID)$ is a deterministic function of the finalized ledger.
    \end{theorem}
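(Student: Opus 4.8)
The plan is to present the execution order as a composition of maps, each a deterministic function of the finalized ledger, so that the composite is deterministic as well. Fix the finalized ledger at height $h$, i.e.\ the sequence of block vectors $(B_1^t,\dots,B_n^t)_{t\le h}$; this is the common output of the vector-commit consensus abstraction and is therefore identical at every honest validator, so it suffices to show that the order is a function of this object alone.

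First I would show that the ledger determines $X_{\ID}(h)$. The deduplication rule of Section~\ref{sec:protocol} fixes a total order on all bundle occurrences in the ledger (increasing height $t$, then lane index $i$, then intra-block position), and for each pair $(\ID,j)$ the first occurrence in this order inside a bundle with $\Ver=1$ fixes the value $y_j$; by Lemma~\ref{lem:validity} any finalized block carrying a bundle with $\Ver=0$ is invalid and hence absent, so the scan is well-defined. Thus $X_{\ID}(h)$ is a deterministic function of the ledger prefix up to $h$, whether the sender is honest or equivocates across lanes. Next, since $\Dec$ and $\Com$ are deterministic algorithms — the per-symbol coefficients being derived deterministically from $j$ and public parameters — the predicate
\[
  P_{\ID}(h)\colon\quad |X_{\ID}(h)|\ge K \ \wedge\ \Dec(X_{\ID}(h))=(\sigma,\Payload) \ \wedge\ C=\Com(\sigma,\Payload)
\]
is a deterministic predicate on the ledger for every $h$.

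From here $\htincl(\ID)=\min\{\,h : P_{\ID}(h)\,\}$ by Definition~\ref{def:incl-height}, and a minimum of a deterministic predicate over $\mathbb{N}$ is a well-defined (partial) deterministic function — undefined exactly when the transaction is not yet included. For honestly encoded transactions, Lemma~\ref{lem:soundness} together with monotone decoding (Definition~\ref{def:monotone-dec}) ensures the payload recovered at the first success height matches the one recovered at any later height, so the executed value is unambiguous; for a malicious sender this extra step is unnecessary, since $P_{\ID}$ already pins down the first success and the corresponding decoder output. The set of included transactions at height $h$ is then $\{\,\ID : \htincl(\ID)\le h\,\}$, a deterministic set, and by Lemma~\ref{lem:header-nm} and collision resistance of $H$ distinct transactions carry distinct identifiers except with negligible probability, so $\ID$ is a unique sort key and the pair $(\htincl(\ID),\ID)$ induces a strict total order with no ties; sorting by it is a deterministic operation, and combining this with agreement on the ledger gives the claim.

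The step I expect to be the main obstacle is the treatment of $\htincl$: because $\Dec$ is evaluated on the growing family of snapshots $X_{\ID}(h)$, one must argue carefully that ``first success'' is genuinely single-valued and that the executed payload is not an artifact of which symbols happened to land first. This is precisely what Lemma~\ref{lem:soundness} provides for honest encodings, and it is also where the ``except with negligible probability'' qualifier enters, through the residual decoding-failure probability $\delc$ and through collision resistance of $H$; I would therefore phrase the theorem (and the proof) so that determinism is understood up to this negligible slack rather than as an unconditional identity.
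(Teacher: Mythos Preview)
Your proposal is correct and follows the same skeleton as the paper: the finalized ledger determines $X_{\ID}(h)$ (via the deduplication scan and Lemma~\ref{lem:validity}), the algorithms $\Dec$ and $\Com$ are deterministic, hence $\htincl(\ID)$ and the induced lexicographic order are fixed.

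One place where you go further than necessary: the paper states and proves this theorem \emph{unconditionally}, with no $\delc$ or collision-resistance caveat, because order determinism only needs $\Dec$ to be a deterministic \emph{algorithm} on each fixed input $X_{\ID}(h)$---it is irrelevant whether decoding succeeds, fails, or outputs garbage, as long as it does the same thing at every honest validator. Your discussion of payload uniqueness across different symbol snapshots, the appeal to Lemma~\ref{lem:soundness} and monotone decoding, the tie-breaking argument via Lemma~\ref{lem:header-nm}, and the proposed ``up to negligible slack'' phrasing all properly belong to Theorem~\ref{thm:nm} (non-malleability of the decoded \emph{payload}), not to the present statement, which is purely about the execution \emph{order} being a function of the ledger.
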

    \begin{proof}
        Finality fixes the ledger prefix up to height $h$, hence fixes $X_{\ID}(h)$ for every $\ID$ and $h$ (Lemma~\ref{lem:validity}).
Since $\Dec$ is deterministic, its success/failure and output on input $X_{\ID}(h)$ are fixed (Lemma~\ref{lem:soundness}).
Therefore $\htincl(\ID)$ is fixed by the finalized ledger, and so is the lexicographic order on $(\htincl(\ID),\ID)$.

    \end{proof}


    
    \begin{theorem}[Non-Malleability of \proposal]
\label{thm:nm}
Fix a finalized ledger. Under the assumptions above, for each identifier $\ID$ corresponding to an honestly encoded transaction, the decoded payload and its execution position $(\htincl(\ID),\ID)$ are uniquely determined by that ledger, except with probability at most~$\delc$ due to decoder failure.
\end{theorem}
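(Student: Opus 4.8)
The plan is to obtain Theorem~\ref{thm:nm} as a corollary of the safety results already proved in this subsection, assembled along the causal chain
\[
\text{finalized ledger}\;\longrightarrow\;X_{\ID}(\cdot)\;\longrightarrow\;\text{decoded payload}\;\longrightarrow\;\htincl(\ID)\;\longrightarrow\;\text{execution order},
\]
where every arrow is a deterministic map except the third, which carries a residual failure probability $\delc$; the theorem then follows by composition.

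First, fix the finalized ledger. By Lemma~\ref{lem:validity}, every included bundle satisfies $\Ver=1$, and the per-index deduplication rule (height, then lane index, then intra-block position) assigns to each pair $(\ID,j)$ a single symbol value. Hence for every height $h$ the set $X_{\ID}(h)$ of Definition~\ref{def:incl-height} is a deterministic function of the ledger prefix up to $h$, and it is monotone nondecreasing in $h$. Next, by Lemma~\ref{lem:header-nm}, the header $\Hpub=(\Hpre,C,\Sigma)$ bound to $\ID$ is, up to a negligible collision/forgery event, the unique one consistent with $\ID=H(\Hpre\parallel C)$ and a valid signature; in particular $C$ is determined by $\ID$, so the commitment-opening test $C\stackrel{?}{=}\Com(\sigma,\Payload)$ appearing in Definition~\ref{def:incl-height} is a well-defined deterministic predicate. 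This negligible event is dominated by $\delc$ in our parameter regimes, and I fold it into the stated bound.

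Now handle decoding. Since the transaction is honestly encoded, $y_j=R_j(M)$ with $M=(\sigma\parallel\Payload)$, so every $X_{\ID}(h)$ with $|X_{\ID}(h)|\ge K$ is a set of correctly encoded index–symbol pairs belonging to a single codeword. By Lemma~\ref{lem:soundness}, equivalently the monotone-decoding property of Definition~\ref{def:monotone-dec}, any two such sets either both decode to the same $M$ or at least one decode fails, with residual disagreement probability at most $\delc$. Because $\Dec$ is deterministic, for the fixed ledger the success/failure and the output of $\Dec$ on each input $X_{\ID}(h)$ are fixed. Consequently $\htincl(\ID)$ — the least $h$ at which $|X_{\ID}(h)|\ge K$, $\Dec$ succeeds, and the opening verifies — is a deterministic function of the ledger, and the payload recovered there is unique; this holds on the complement of the $\le\delc$ event. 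Applying Theorem~\ref{thm:order} to the resulting $(\htincl(\ID),\ID)$ values yields determinism of the lexicographic execution order, which completes the proof.

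The one place where the $\delc$ term is genuinely needed is the interaction between the ``minimal height'' quantifier in $\htincl$ and decode attempts at intermediate heights $h'<\htincl(\ID)$ where already $|X_{\ID}(h')|\ge K$: a naive argument would want each such attempt to behave consistently, which superficially looks like it needs a union bound over the (bounded) set of such heights. The clean fix is that Definition~\ref{def:monotone-dec} is phrased so that \emph{all} sufficiently large honest-symbol sets jointly agree-or-fail with total disagreement probability absorbed into $\delc$, so no growing union bound over heights is incurred. A secondary subtlety is excluding a spurious decoder output $M'\neq M$ that still passes the opening check; this would contradict computational binding of $\Com$, so the corresponding event is negligible and is likewise subsumed in $\delc$ (or kept as a separate negligible additive term if one prefers to reserve $\delc$ for coding failure only). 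I expect the careful bookkeeping of these negligible terms, rather than any substantive new argument, to be the main thing to get right.
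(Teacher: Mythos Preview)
Your proposal is correct and follows essentially the same approach as the paper: chain Lemma~\ref{lem:validity} (deterministic symbol resolution), Lemma~\ref{lem:soundness} (unique decode for honest encodings), and Theorem~\ref{thm:order} (deterministic execution order), with $\delc$ as the sole residual failure. You additionally invoke Lemma~\ref{lem:header-nm} and spell out the union-bound and commitment-binding subtleties that the paper's three-line proof leaves implicit, but the skeleton is identical.
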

\begin{proof} 
By definition, $X_{\ID}(h)$ depends only on the finalized prefix up to $h$. By Lemma~\ref{lem:validity}, symbols are deterministically resolved; by Lemma~\ref{lem:soundness}, decoding is unique; by Theorem~\ref{thm:order}, execution position is fixed. The only failure mode is decoding failure (probability at most $\delc$).
\end{proof}

    Theorem~\ref{thm:nm} constrains semantics for a fixed finalized history; it does not constrain an adversary’s influence on $\htincl(\ID)$ via censorship or scheduling. We prove censorship resistance and liveness in Section~\ref{sec:liv}.
    
    \subsection{Liveness and Censorship Resistance}
    \label{sec:liv}
    We now quantify the per-slot success probability of inclusion under the good case regime. Let $m$ denote the number of addressed lanes for a transaction, let $s$ denote the number of symbols per bundle, and let $c_e$ be the effective number of censored lanes as defined in Section~\ref{sec:model}. Let $H$ be the random variable counting how many of the $m$ addressed lanes are
    honest. Since we are sampling without replacement from the $n$ lanes, we have
    \[
      H \sim \mathrm{Hypergeom}(n,n-c_e,m).
    \]

    \begin{lemma}[Single-slot Inclusion (good case)]
    \label{lem:liveness-one}
    Assume in a slot all addressed honest lanes publish their bundles. Then inclusion succeeds in that slot with probability at least
    \[
      (1-\delc)\cdot \Pr\!\big[\,H \ge \lceil K/s\rceil\,\big].
    \]
    \end{lemma}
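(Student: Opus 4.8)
The plan is to decompose the inclusion event in a slot into two sufficient sub-events: (i) enough distinct verified symbols for $\ID$ become available in finalized history by the end of the slot, and (ii) the decoding attempt at that height succeeds and the commitment opening verifies. For (i), I would argue that in the good-case regime every honest addressed lane has received its bundle by $t-\Delta$, has sufficient blockspace, and hence includes its bundle in its block $B_i^t$, which is finalized at slot height $t$ by the vector-commit abstraction. Since the sender chose the index sets $J_i$ to be pairwise disjoint, each honest included bundle contributes exactly $s$ previously unseen indices to $X_{\ID}(t)$; per-index deduplication (Lemma~\ref{lem:validity}) only ever removes duplicates, so the count is not decreased. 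Thus if $H$ honest lanes are addressed and all publish, we get $|X_{\ID}(t)| \ge H\cdot s$ (up to symbols already present from earlier slots, which only helps). Decoding is attempted whenever $|X_{\ID}(t)|\ge K$, so the threshold is met whenever $H\cdot s \ge K$, i.e. $H \ge \lceil K/s\rceil$.

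Next I would handle sub-event (ii). Conditioned on $|X_{\ID}(t)|\ge K$ and the transaction being honestly encoded, the symbols in $X_{\ID}(t)$ are verified index–symbol pairs of the form $y_j = R_j(M)$ (the verification relation of Definition~\ref{def:ver-relation} plus honesty of the sender), so they form a subset of a single honest codeword. By the decoder guarantee assumed in Section~\ref{sec:model} (and restated in Lemma~\ref{lem:soundness}), $\Dec$ recovers $M = (\sigma\parallel\Payload)$ from any such $K$-subset with probability at least $1-\delc$, and then the commitment opening $C = \Com(\sigma,\Payload)$ verifies because the honest sender computed $C$ this way. So conditioned on reaching the threshold in this slot, inclusion at height $t$ succeeds except with probability at most $\delc$.

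Finally I would combine the two. The event "$H \ge \lceil K/s\rceil$" is determined by the lane sampling alone and is independent of the decoder's internal randomness, so the probability that both the threshold is met and decoding/opening succeeds is at least $\Pr[H \ge \lceil K/s\rceil]\cdot(1-\delc)$. This lower-bounds the probability of inclusion in the slot, giving the claimed bound $(1-\delc)\cdot\Pr[H \ge \lceil K/s\rceil]$. Along the way I would note that $H \sim \mathrm{Hypergeom}(n,\,n-c_e,\,m)$ as already stated, so the first factor is a concrete hypergeometric tail, but evaluating that tail is deferred — here we only need the structural bound.

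The main obstacle, and the place to be careful, is justifying step (i) cleanly: one must invoke exactly the good-case assumptions (bundles delivered by $t-\Delta$, adequate blockspace, post-GST finalization of the slot-$t$ vector) to conclude that \emph{all} honest addressed lanes actually get their $s$ fresh symbols into finalized height $t$, and one must make sure disjointness of the $J_i$ is used so that $H$ honest bundles really yield $Hs$ distinct indices rather than possibly fewer after deduplication. The decoding step (ii) is comparatively routine given the assumed $(1-\delc)$ decoder guarantee and the honest-sender commitment consistency; the only subtlety there is the independence claim used to multiply the two probabilities, which holds because lane sampling and the decoder's coin tosses are separate sources of randomness.
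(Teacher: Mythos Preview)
Your proposal is correct and follows essentially the same approach as the paper: condition on $H \ge \lceil K/s\rceil$, use disjointness of the per-lane index sets to conclude at least $K$ distinct verified symbols are finalized, apply the $(1-\delc)$ decoder guarantee, and combine. The paper's proof is terser and phrases the combination as ``condition, then take expectations over $H$'' rather than invoking independence of sampling and decoder randomness, but the underlying argument is identical; your added care about the good-case assumptions, disjointness of the $J_i$, and commitment-opening verification simply makes explicit what the paper leaves implicit.
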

    
    \begin{proof}
    Condition on the event $\{H \ge \lceil \Kneed/s\rceil\}$. Each honest lane contributes $s$ verified symbols, so at least $\Kneed$ distinct verified indices for~$\ID$ appear in finalized history. By the decoder guarantee, decoding succeeds with probability at least $1-\delc$ given any $\Kneed$ verified symbols. Taking expectations over the distribution of $H$ yields the stated lower bound. No independence assumption beyond the hypergeometric sampling is required.
    \end{proof}
    
    \begin{lemma}[Multi-slot inclusion under resampling]
    \label{lem:multi-slot-resampling}
    If the sender resamples $U$ each slot from the same distribution (parameters $m,s,K$) and, post-GST, the per-slot success probability satisfies
    \[
    p \ \ge\ (1-\delta_{\mathrm{code}})\cdot \Pr\!\bigl[H \ge \lceil K/s \rceil\bigr],
    \]
    then the number of slots to inclusion is stochastically dominated by a geometric r.v.\ with parameter $p$.
    \end{lemma}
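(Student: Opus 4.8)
The plan is to reduce the resampled, multi-slot process to a sequence of (essentially) independent Bernoulli trials with per-slot success probability at least $p$, and then invoke the standard coupling between the first-success time of such a sequence and a geometric random variable. Write $T$ for the number of slots until inclusion, i.e.\ the first slot whose associated ledger height $h$ satisfies $|X_{\ID}(h)|\ge\Kneed$ together with successful decoding and a verifying commitment opening (Definition~\ref{def:incl-height}). It suffices to show $\Pr[T>k]\le(1-p)^k$ for every integer $k\ge0$, which is precisely the assertion that $T$ is stochastically dominated by a $\mathrm{Geom}(p)$ variable.

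First I would isolate, for each slot $t$, the \emph{fresh-success} event $E_t$: the honest lanes sampled in slot $t$ \emph{alone} contribute at least $\Kneed$ distinct verified index–symbol pairs for $\ID$, and decoding of the (accumulated) verified set at the height corresponding to slot $t$ succeeds. Because the sender resamples $U$ independently each slot from the same distribution, the honest-lane counts $H_1,H_2,\dots$ are i.i.d.\ $\mathrm{Hypergeom}(n,n-c_e,m)$ and are independent of everything that happened in earlier slots, so conditioning on any history in which inclusion has not yet occurred (i.e.\ on $\{T>t-1\}$) does not change the law of $H_t$. Combining this with Lemma~\ref{lem:liveness-one} (good case) and the hypothesis on the per-slot success probability, $\Pr[E_t\mid \text{history before slot }t,\ T>t-1]\ge p$.

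Next I would use monotonicity to pass from $E_t$ back to $T$. Post-GST and in the good case, bundles published by honest lanes are finalized and their symbols persist in $X_{\ID}(\cdot)$; moreover, for an honestly encoded transaction every verified symbol in $X_{\ID}(h)$ equals $R_j(M)$, so the accumulated set is always a set of correctly encoded pairs and, once it has size $\ge\Kneed$, the decoder succeeds on it with probability at least $1-\delc$ by the decoder guarantee (with any residual cross-attempt disagreement absorbed into $\delc$ via Definition~\ref{def:monotone-dec} and Lemma~\ref{lem:soundness}). Hence $E_t\Rightarrow T\le t$, so $T\le\min\{t:E_t\}$. Telescoping over slots then gives $\Pr[T>k]\le\Pr\bigl[\bigcap_{t=1}^{k}\neg E_t\bigr]=\prod_{t=1}^{k}\Pr[\neg E_t\mid\neg E_1,\dots,\neg E_{t-1}]\le(1-p)^k$, as required; the bound is conservative precisely because cross-slot accumulation of symbols only helps.

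The step I expect to be the main obstacle is making the conditioning/independence argument airtight against an adaptive adversary: one must argue that the adversary's behaviour—which lanes act as censors (captured abstractly by $c_e$), how it schedules deliveries, which earlier slots made partial progress—cannot be correlated with the fresh sample $U_t$, so that $H_t$ is genuinely a fresh hypergeometric draw and the per-slot lower bound $p$ survives conditioning on $\{T>t-1\}$. A secondary subtlety is the $\delc$ bookkeeping: decoding in slot $t$ runs on the growing set $X_{\ID}(h)$ rather than on an exactly-$\Kneed$-sized set, which is handled by the ``any $\ge\Kneed$ correctly encoded pairs'' form of the decoder guarantee together with the monotone-decoding convention, leaving the residual failure within the already-stipulated $\delc$.
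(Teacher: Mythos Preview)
Your proposal is correct and follows essentially the same approach as the paper: treat each resampled slot as a Bernoulli trial with success probability at least $p$ via Lemma~\ref{lem:liveness-one}, then conclude geometric domination of the first-success time. The paper's proof is a two-sentence sketch of exactly this argument; your version simply fills in the details (the explicit $\Pr[T>k]\le(1-p)^k$ criterion, the fresh-success event $E_t$, the monotonicity step $E_t\Rightarrow T\le t$, and the conditioning discussion), none of which the paper spells out.
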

    
    \begin{proof}
    Each slot is a Bernoulli trial that succeeds whenever enough honest addressed lanes are hit and $\Dec$ succeeds. The same conditional argument as in Lemma~\ref{lem:liveness-one} yields the per-slot success probability lower bound, geometric domination follows.
    \end{proof}
    We show in Section~\ref{sec:opt-sub}, Theorem~\ref{thm:rateless_optimal_prob} the optimal submission strategy for a total per-slot failure probability $\delta$, comparing it with variants of naive replication (Theorem~\ref{thm:naive_replication_optimal}) and MDS coding (Theorem~\ref{thm:mds_optimal_prob}). We next turn to privacy guarantees.
    \subsection{Privacy}
    \label{sec:privacy}
    \proposal aims for until decode privacy before $\Kneed$ verified symbols for~$\ID$ are published, the adversary should learn essentially only the bits contained in the revealed symbols and the public header~$\Hpub$. Let the adversary's pre-decode view of a transaction be captured by the leakage function
\[
      \mathcal{L}(tx)
      := \Bigl(
          \Hpub,\
          \bigl\{(i,j,y_j)\ \text{for each verified symbol observed}\bigr\}
         \Bigr),
\]

where the set includes lane indices $i$, symbol indices $j$, and symbol values $y_j$. Bundle signatures are also visible to the adversary but depend only on $\ID$ and the observed $(j,y_j)$ values.

    \begin{lemma}[Privacy until $K$ symbols]
\label{lem:until-K-privacy}
For any $r<K$ observed verified symbols,
\[
H_\infty\!\left(\Payload\ \middle|\ \mathcal{L}(tx)\ \text{with } r\text{ symbols}\right)
\ge H_\infty(\Payload) - r\,\ell_{\mathrm{sym}}.
\]
Indices and lane labels may leak structure (e.g., sampler bias) but no additional payload bits beyond $r\,\ell_{\mathrm{sym}}$.
\end{lemma}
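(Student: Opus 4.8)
The plan is to decompose the pre-decode leakage $\mathcal{L}(tx)=\bigl(\Hpub,\{(i,j,y_j)\}\bigr)$ into a payload-independent part and a part of total bit-length at most $r\,\symb$, and then invoke the min-entropy chain rule. Concretely, I would write the observed view as $(\Hpub, I, J, Y)$, where $I=(i_1,\dots,i_r)$ are the lane labels, $J=(j_1,\dots,j_r)$ the symbol indices, and $Y=(y_{j_1},\dots,y_{j_r})\in\{0,1\}^{r\symb}$ the symbol values; the bundle signatures $\Sigma_i$ are deterministic functions of $\ID$ and of the pairs $\langle j,y_j\rangle$ already present in the view, so they contribute nothing beyond it.

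First I would argue that $(\Hpre, I, J)$ carries no payload information: the preimage header holds fee/accounting fields, and the lane set $U$ together with the index sets $J_i$ are produced by the sender's sampler, both via processes that take no input from $\Payload$; hence the joint law factorizes and conditioning on $(\Hpre, I, J)$ leaves $H_\infty(\Payload)$ unchanged. Next, for the commitment $C=\Com(\sigma,\Payload)$ carried in $\Hpub$ (and the header signature $\Sigma=\mathrm{Sig}_{\mathsf{sk}}(\ID)$, which depends on $C$ only through $\ID$), I would invoke the hiding property of $\Com$: $C$ is independent of $\Payload$, so $H_\infty(\Payload\mid \Hpub, I, J)\ge H_\infty(\Payload)-\mathrm{negl}$, the negligible slack vanishing when $\Com$ is taken statistically hiding.

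Finally I would apply the standard average-min-entropy chain rule: since $Y$ takes values in a set of size at most $2^{r\symb}$,
\[
  \widetilde H_\infty\bigl(\Payload \mid \Hpub, I, J, Y\bigr)
  \;\ge\; \widetilde H_\infty\bigl(\Payload \mid \Hpub, I, J\bigr) - r\,\symb
  \;\ge\; H_\infty(\Payload) - r\,\symb - \mathrm{negl},
\]
and absorb the negligible term. Equivalently, this is exactly the Section~\ref{sec:model} assumption that any $r$ symbols reveal at most $r\,\symb$ bits about $\Payload$; the code's pseudorandom-linear-combination structure is not needed for this generic bound — only for the stronger near-zero statement in the regime $r<K$. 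The closing remark of the lemma (indices and lane labels may leak sampler structure but no extra payload bits) is precisely the first-step observation that $(I,J,\Hpre)$ is payload-independent.

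The main obstacle is the tension between the information-theoretic notation $H_\infty$ and the fact that $\Com$ is only assumed computationally binding/hiding in Section~\ref{sec:guarantees}: a computationally hiding but perfectly binding commitment information-theoretically pins down $\Payload$ from $(C,\sigma)$-consistency, so a clean statement needs either (i) $\Com$ statistically hiding, or (ii) reading $H_\infty(\cdot\mid\cdot)$ as conditional computational (HILL-type) min-entropy against PPT observers, under which $C$ contributes no noticeable loss. I would flag this caveat explicitly and otherwise present the lemma as the chain-rule computation above.
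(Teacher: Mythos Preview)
Your proposal is correct and follows essentially the same approach as the paper: decompose the leakage into header, indices/lanes, symbol values, and signatures; argue that the commitment hides the payload, that signatures add nothing beyond what is already in the view, and that the $r$ symbol values can cost at most $r\,\symb$ bits. Your version is in fact tighter than the paper's sketch---you make the min-entropy chain rule explicit (so the pseudorandom-linear-combination assumption on the code is not actually needed for this generic bound, as you note), and you correctly flag the computational-versus-statistical hiding tension in $\Com$, which the paper glosses over.
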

\begin{proof}
The adversary's view consists of the public header $\Hpub$ (including the commitment $C$), and the observed symbols $\{(j, y_j)\}$ for $r$ indices, along with the corresponding bundle signatures. By the information-theoretic properties of rateless codes, each symbol $y_j$ is a (pseudo)random linear combination of input blocks, contributing at most $\symb$ bits of information about the payload. By the hiding property of $\Com$, the commitment $C$ leaks at most negligible information about $\Payload$. The bundle signatures $\Sigma_i$ depend only on the transaction ID, the indices, the symbol values, and the sender's key; they do not reveal unobserved portions of the payload. Thus the adversary's information about the payload is bounded by the $r \cdot \symb$ bits contained in observed symbols.
\end{proof}

    We next bound the adversary's probability of early decode reconstructing the payload before any honest user could reasonably expect inclusion. Let $A$ be the number of addressed lanes that are controlled by the adversary
    (so $A\sim\mathrm{Hypergeom}(n,c_e,m)$). If each adversarial lane receives a bundle with $s$ symbols, then the adversary obtains $A\cdot s$ symbols in that slot.
    
    \begin{lemma}[Adversarial Early Decode Probability]
    \label{lem:early-decode}
    Let $A\sim\mathrm{Hypergeom}(n,c_e,m)$ be the number of adversarially controlled addressed lanes. If each bundle carries $s$ symbols, the probability that the adversary can attempt reconstruction in a slot is
    \[
      \Pr\!\big[A \ge \lceil K/s\rceil\big].
    \]
    If $m s < K$, this probability is zero. 
    \end{lemma}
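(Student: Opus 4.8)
The plan is to prove Lemma~\ref{lem:early-decode} by a direct counting argument that mirrors the single-slot inclusion analysis of Lemma~\ref{lem:liveness-one}, but from the adversary's side. First I would fix a slot and recall the sampling setup: the sender draws $U \subseteq \V$ of size $m$ uniformly without replacement, and among the $n$ lanes exactly $c_e$ are effectively adversarial (in the sense that the adversary observes everything delivered on those lanes). Hence $A := |U \cap (\text{adversarial lanes})| \sim \mathrm{Hypergeom}(n, c_e, m)$, which is simply the complement of the honesty count $H$ used earlier, since $H + A = m$ when every addressed lane is either honest or adversarial.

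Next I would argue the core implication: the adversary can \emph{attempt} reconstruction in the slot if and only if it holds at least $K$ distinct verified symbols for~$\ID$. Since each bundle (one per addressed lane) carries $s$ symbols and, for an honest sender, the index sets $J_i$ are disjoint across lanes, the $A$ adversarially addressed bundles jointly contribute exactly $A \cdot s$ distinct verified index–symbol pairs. Therefore the adversary meets the decode threshold precisely on the event $\{A s \ge K\}$, equivalently $\{A \ge \lceil K/s \rceil\}$, and the probability of this event is $\Pr[A \ge \lceil K/s\rceil]$ as claimed. For the second assertion, if $m s < K$ then even $A = m$ (the adversary hits every addressed lane) yields only $m s < K$ symbols, so $\{A \ge \lceil K/s\rceil\}$ is empty and the probability is zero.

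The main subtlety — and the step I expect to need the most care — is the disjointness assumption on the index sets. If the sender were careless and placed overlapping indices across lanes, the adversary holding $A$ bundles might see fewer than $A s$ distinct symbols, so the stated expression would be an \emph{over}estimate rather than an exact value; conversely, cross-bundle equivocation on the sender's part could in principle hand the adversary the same index $j$ with two different values, but by the deduplication discussion in Section~\ref{sec:protocol} that does not increase the count of usable distinct indices. I would therefore state the lemma under the honest-sender convention already fixed in the ``Lane sampling'' paragraph (disjoint $J_i$, $s$ fresh indices per bundle), under which the count is exact, and remark that for a sender who only wishes to bound adversarial leakage, disjointness is the natural (and only) sensible choice. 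A minor secondary point worth noting is that this bounds the adversary's ability to \emph{attempt} decode in a single slot; a full ``early-decode'' bound across slots would multiply or union-bound over slots, but that is outside the scope of this lemma and I would defer it.
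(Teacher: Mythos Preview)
Your proposal is correct and follows essentially the same counting argument as the paper: the adversary holds $A\cdot s$ symbols, needs at least $K$, hence the event is $\{A\ge\lceil K/s\rceil\}$, and $ms<K$ makes this event empty. Your added care about the disjointness of the index sets $J_i$ across lanes is a valid refinement that the paper leaves implicit, but the underlying approach is the same.
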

    \begin{proof}
    The adversary's symbol budget is $A\cdot s$. Early decode requires $A\cdot s \ge \Kneed$, i.e.\ $A\ge \lceil \Kneed/s\rceil$, which yields the stated probability. If $m\,s < \Kneed$ then even if $A=m$ (full control of all
    addressed lanes) the adversary holds fewer than $\Kneed$ symbols, so early decode is impossible.
    \end{proof}
    
    Lemmas~\ref{lem:until-K-privacy} and~\ref{lem:early-decode} quantify the trade off between the number of addressed lanes $m$, symbols per bundle $s$, and the decoder threshold $\Kneed$, increasing $m$ and $s$ improves liveness but also increases the adversary's early decode probability. Section~\ref{sec:performance} compares these trade offs for variants of \proposal. 
    
    \section{Performance Comparison}
    \label{sec:performance}
    In this section, we compare the performance of the \proposal protocol (using verifiable rateless coding) against two variants: $\PiStr$ that uses naive replication, and $\PiMDS$ that uses fixed-rate MDS coding. The primary goal is to understand the trade-offs between these strategies, particularly concerning bandwidth consumption, under varying parameters such as payload size, metadata overhead, and the sender's desired censorship tolerance ($c_e$) and reliability ($\delta$). We refer to the variant of \proposal shown in the previous section as $\PiRTL$. 
    
    We first define the cost metrics for each approach in terms of total published bytes and the minimum bytes required for successful inclusion. We then analyze the asymptotic overhead for large payloads to understand fundamental scalability limits. While naive replication is simple, it often incurs significant overhead and sacrifices pre-finality privacy. MDS coding offers better bandwidth efficiency for large payloads but requires careful parameter selection ($m, k$) upfront and typically has zero decoding failure probability ($\delta_{code}=0$). Rateless encoding aims to provide flexibility and potentially lower overhead, especially when accounting for metadata and probabilistic guarantees. We analyse these differences, paving the way for evaluating concrete end-to-end latency and other empirical metrics in Section~\ref{sec:eva}.
    \subsection{Bandwidth Cost}
    Let $\M_h$ denote the per-bundle header overhead in bytes (the components of $Bundle_i$ excluding symbol data) and $\M_s$ the per-symbol metadata (the index $j$). Each of the $m$ lanes receives a bundle containing $s$ symbols in the rateless case, one share per lane in the MDS case (no need then to use $\M_s$), or the full transaction in the naive case.

    
    We analyze the bandwidth cost for each submission strategy, defining both the total published bytes (if all $m$ addressed lanes publish their data) and the minimum required bytes for a successful decode in a single slot in Proposition~\ref{prop:bytes_comparison}.
    \begin{proposition}[Bandwidth Cost]
\label{prop:bytes_comparison}
Let $m$ be the number of addressed lanes. Write $L_{\mathrm{pub}}$ for total published bytes and $L_{\mathrm{min}}$ for the minimum bytes required for inclusion.

\begin{enumerate}
\item[\textbf{(a)}] \textbf{Naive:} $L_{\mathrm{pub}} = m(\M_h + S)$, \quad $L_{\mathrm{min}} = \M_h + S$.

\item[\textbf{(b)}] \textbf{MDS $(m,k)$:} $L_{\mathrm{pub}} = m(\M_h + S/k)$, \quad $L_{\mathrm{min}} = k(\M_h + S/k)$.

\item[\textbf{(c)}] \textbf{Rateless:} $L_{\mathrm{pub}} = m(\M_h + s(\M_s + \symb))$, \quad $L_{\mathrm{min}} = \lceil K/s \rceil(\M_h + s(\M_s + \symb))$.
\end{enumerate}
\end{proposition}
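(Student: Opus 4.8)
The plan is to prove each of the six formulas by direct byte accounting against the object definitions of Section~\ref{sec:protocol} and Section~\ref{sec:model}, organized by strategy and, within each strategy, by the total-published figure $L_{\mathrm{pub}}$ and the minimum-for-inclusion figure $L_{\mathrm{min}}$.

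First, for the published totals I would unfold the serialized size of the single object transmitted to one lane and multiply by $m$, since by the lane-sampling rule each of the $m$ addressed lanes receives exactly one such object. In the naive case that object is the whole transaction $tx=(\Hpub,\Payload)$ of length $\M_h+S$ (folding all public-header/accounting metadata into $\M_h$ and the committed message length $|\sigma|+S_{\mathrm{pl}}$ into $S$ as fixed in Section~\ref{sec:model}), giving $L_{\mathrm{pub}}=m(\M_h+S)$. In the MDS$(m,k)$ case the object is one of the $m$ coded shares of size $S/k$ carried under header overhead $\M_h$, with no per-symbol index field, so the per-lane cost is $\M_h+S/k$ and $L_{\mathrm{pub}}=m(\M_h+S/k)$. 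In the rateless case the object is a bundle $Bundle_i$ carrying $s$ symbols, where each symbol contributes its data length $\symb$ plus the index metadata $\M_s$, and the bundle-level fields $(\ID,i,J_i,\Sigma_i,\Hpub)$ are absorbed into $\M_h$; hence one bundle is $\M_h+s(\M_s+\symb)$ bytes and $L_{\mathrm{pub}}=m(\M_h+s(\M_s+\symb))$.

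Second, for the minimum figures I would invoke the decode threshold of each scheme and count the least number of transmitted objects whose finalization suffices to reconstruct $\Payload$. In the naive case one finalized copy already contains $\Payload$, so $L_{\mathrm{min}}=\M_h+S$. In the MDS case reconstruction needs any $k$ distinct shares, each of cost $\M_h+S/k$, so $L_{\mathrm{min}}=k(\M_h+S/k)$. In the rateless case Definition~\ref{def:incl-height} and Lemma~\ref{lem:liveness-one} state that inclusion requires $K$ distinct verified symbols; since an honest sender chooses disjoint index sets $J_i$ across lanes (Section~\ref{sec:protocol}, ``Lane sampling''), each finalized bundle supplies $s$ previously unseen indices, so $\lceil K/s\rceil$ bundles suffice (and yield $\lceil K/s\rceil\cdot s\ge K$ symbols), each costing $\M_h+s(\M_s+\symb)$, whence $L_{\mathrm{min}}=\lceil K/s\rceil(\M_h+s(\M_s+\symb))$.

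The only step needing care is the rateless $L_{\mathrm{min}}$: it is the minimum over finalization patterns under the stated convention that bundles carry exactly $s$ symbols with disjoint index sets, not a lower bound over arbitrary (e.g.\ adversarially oversized) bundles, and I would flag this convention explicitly and note that the $\lceil K/s\rceil$ here is precisely the ceiling $\lceil K/s\rceil$ appearing in Lemma~\ref{lem:liveness-one}, so the byte accounting is consistent with the liveness analysis. Everything else is routine serialization arithmetic, so I do not expect a genuine obstacle.
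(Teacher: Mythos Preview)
Your proposal is correct and follows the same approach as the paper: compute the per-lane object size for each strategy and multiply by the relevant lane count ($m$ for $L_{\mathrm{pub}}$, and $1$, $k$, or $\lceil K/s\rceil$ respectively for $L_{\mathrm{min}}$). The paper's proof is a terse two-sentence version of exactly this byte accounting; your write-up simply unfolds the per-bundle sizes more explicitly and adds the useful remark that the $\lceil K/s\rceil$ count relies on the disjoint-index convention from the lane-sampling paragraph, which the paper leaves implicit.
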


\begin{proof}
Each lane publishes one bundle. For naive replication, inclusion requires 1 honest lane; for MDS, $k$ lanes; for rateless, $\lceil K/s \rceil$ lanes (each contributing $s$ symbols toward the $K$ needed). Multiplying per-bundle size by the relevant lane count gives both expressions.
\end{proof}
    Proposition~\ref{prop:bytes_comparison} provides exact costs including metadata, understanding the fundamental scalability requires examining the asymptotic behavior for large payloads, where metadata costs become negligible relative to the payload data itself. Corollary~\ref{cor:overhead_floor_comparison} derives the asymptotic overhead floor, defined as the ratio of total published bytes to the original payload size ($L_{pub}/S$), for each strategy.
    \begin{corollary}[Comparison of Asymptotic Overhead Floors]
\label{cor:overhead_floor_comparison}
Ignoring metadata costs (assuming $S$ is large) and analyzing the probabilistic case, the asymptotic byte overhead factor ($L_{\mathrm{pub}} / S$) for each strategy reveals its fundamental scalability:

\begin{enumerate}
    \item[\textbf{(a)}] \textbf{Naive Replication:} The overhead floor is $m_{\text{opt}}$, the smallest $m$ satisfying the probabilistic constraint $\Pr[H < 1] \le \delta$.

    \item[\textbf{(b)}] \textbf{MDS Coding:} The overhead floor is determined by the maximum reliable code rate, which is the uncensored lane ratio. The floor is therefore $\mathbf{\frac{1}{1 - c_e/n}}$. This overhead scales with the \textbf{ratio} of censors.

    \item[\textbf{(c)}] \textbf{Rateless Coding:} The overhead floor is $\mathbf{\frac{1 + \varepsilon}{1 - c_e/n}}$. This matches the scalability of MDS coding, with an explicit overhead factor of $(1+\varepsilon)$ paid for the flexibility of the rateless scheme.
\end{enumerate}
\end{corollary}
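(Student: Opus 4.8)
The plan is to reduce the corollary to Proposition~\ref{prop:bytes_comparison} by forming the overhead ratio $L_{\mathrm{pub}}/S$ for each variant and letting $S\to\infty$, so that every additive metadata contribution is $O(1/S)$ and disappears (for $\PiRTL$, ``ignoring metadata'' additionally means $\M_s\ll\symb$, so the per-index bytes are negligible beside the symbol bytes). Concretely, Proposition~\ref{prop:bytes_comparison} gives $L_{\mathrm{pub}}/S\to m$ for naive, $L_{\mathrm{pub}}/S\to m/k$ for $\PiMDS$, and $L_{\mathrm{pub}}/S\to m\,s\,\symb/S$ for $\PiRTL$; the corollary then amounts to minimizing each surviving coefficient over the parameter choices that meet the per-slot reliability target $\delta$ inherited from Lemma~\ref{lem:liveness-one}.

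For part (a), the single-slot success event for naive replication is ``at least one addressed lane is honest'', i.e.\ $H\ge 1$ with $H\sim\mathrm{Hypergeom}(n,n-c_e,m)$, so feasibility is exactly $\Pr[H<1]=\Pr[H=0]\le\delta$. Since $\Pr[H=0]=\binom{c_e}{m}/\binom{n}{m}$ is strictly decreasing in $m$ (an extra addressed lane can only help, and the probability is $0$ once $m>c_e$), there is a well-defined smallest feasible $m$, which we name $m_{\text{opt}}$; the overhead floor is precisely this integer (and for $\delta=0$ this recovers the classical replication factor $m_{\text{opt}}=c_e+1$).

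For parts (b) and (c) the crux is pinning down the largest code rate usable reliably against $c_e$ effective censors. I would argue: with $m$ addressed lanes the number of adversarial addressed lanes is $A\sim\mathrm{Hypergeom}(n,c_e,m)$, hence $A\le c_e$ always and a worst-case adversary censors all of them, leaving $H=m-A\ge m-c_e$ honest contributors; to decode one needs $k$ such lanes ($\PiMDS$) or $\lceil \Kneed/s\rceil$ such lanes ($\PiRTL$). Thus $k\le m-c_e$, i.e.\ the rate satisfies $k/m\le 1-c_e/m\le 1-c_e/n$, with both bounds tight at $m=n$, where the hypergeometric sampling degenerates so that exactly $n-c_e$ lanes are honest with probability one and (for the MDS family) $\delc=0$. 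Hence the $\PiMDS$ floor is $m/k=n/(n-c_e)=1/(1-c_e/n)$. For $\PiRTL$, taking $m=n$ the honest contribution is $(n-c_e)s$ symbols, so reliability (with $\delc$ absorbed into the $\delta$ budget) forces $(n-c_e)s\ge \Kneed=\lceil(1+\varepsilon)S/\symb\rceil$; substituting $s=\lceil \Kneed/(n-c_e)\rceil$ into the surviving term $n\,s\,\symb/S$ and sending $S\to\infty$ yields the floor $(1+\varepsilon)/(1-c_e/n)$, the extra $(1+\varepsilon)$ being exactly the rateless overhead baked into $\Kneed$.

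The main obstacle is justifying that the $m=n$ construction is (asymptotically) optimal for (b) and (c): a sender could instead address $m<n$ lanes, accept a small per-slot failure probability, and bet that hypergeometric fluctuations deliver more than $m-c_e$ honest lanes, pushing $k$ a step above $m-c_e$. I would rule this out by two remarks. First, $1-c_e/m$ is monotone increasing in $m$, so no $m<n$ can even match the deterministic rate available at $m=n$. Second, any $k$ with $\Pr[H<k]\le\delta$ obeys $k/m\le(1-c_e/n)+O\!\bigl(\sqrt{\log(1/\delta)/m}\bigr)$ by standard hypergeometric concentration, so the only possible gain over the $m=n$ floor is a sub-constant correction — precisely the kind of lower-order term the corollary already discards under its ``large $S$, ignore metadata'' framing. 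Combining the three coefficient minimizations then gives the stated floors $m_{\text{opt}}$, $1/(1-c_e/n)$, and $(1+\varepsilon)/(1-c_e/n)$.
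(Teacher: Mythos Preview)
Your approach is the same as the paper's: take $S\to\infty$ in Proposition~\ref{prop:bytes_comparison} so that metadata terms drop out, leaving the replication/coding rate as the overhead factor. The paper's own proof is two sentences and simply \emph{asserts} that the effective rate for MDS and rateless is $(1-c_e/n)$; you go further and actually justify this, by exhibiting the rate at $m=n$ (where the hypergeometric degenerates) and then bounding any probabilistic gain at $m<n$ via concentration. That extra work is correct and fills in what the paper leaves implicit. You also correctly flag a subtlety the paper glosses over: in the rateless case $s$ must scale with $S$, so the $m\,s\,\M_s$ contribution does \emph{not} vanish as $S\to\infty$ and one needs the additional assumption $\M_s\ll\symb$ to reach the stated floor.

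One minor wobble: in your ``obstacle'' paragraph, the first remark (that the deterministic rate $1-c_e/m$ is increasing in $m$) is not the right comparison for the probabilistic regime, since for $m<n$ the relevant benchmark is $\mathbb{E}[H/m]=1-c_e/n$, not $1-c_e/m$. Your second remark (the hypergeometric concentration bound $k/m\le(1-c_e/n)+O(\sqrt{\log(1/\delta)/m})$) is what actually closes the argument, and it stands on its own.
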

\begin{proof}
Immediate from Proposition~\ref{prop:bytes_comparison} by taking $S \to \infty$. Metadata terms vanish, leaving overhead determined by the replication/coding rate. For MDS and rateless, the effective rate is $(1-c_e/n)$; rateless pays an additional $(1+\varepsilon)$ factor for decode threshold $K$.
\end{proof}
    While Corollary~\ref{cor:overhead_floor_comparison} shows coding's asymptotic superiority for large payloads ($S \to \infty$), metadata overhead significantly impacts performance for practical transaction sizes and can make naive replication competitive:
    
    \begin{itemize}
        \item \textbf{For Naive Replication and MDS Coding,} the per-symbol metadata within a lane $\M_s$ does not apply, i.e. $M=\M_h$. Its total impact on the overhead ratio is proportional to $m\cdot \M_h/S$, which diminishes directly as $S$ grows. The additional indexing in the MDS case is negligible extra metadata compared to the rest of metadata $M_h$ required in both the naive and MDS case. However, naive replication, because of requiring only one successful lane, could lead to less lanes containing any copy/share, which could pay off for transactions with a payload of similar size to $M_h$ (i.e. $S\approx 100s$ of bytes). For medium-to-large payloads, MDS coding provides a better duplication factor.
    
        \item \textbf{For Rateless Coding,} the overhead is approximately $\left(1+\frac{\M_h+s\M_s}{s\symb}\right)$ times the asymptotic floor from Corollary~\ref{cor:overhead_floor_comparison}. Using smaller symbols (decreasing $\symb$) makes the encoding more granular but increases the relative cost of per-symbol metadata, $\M_s$. For finite $\M=\M_h +s\M_s$, one has $L_{\mathrm{pub}_\mathrm{rtl}}(\ID)/S \approx \frac{1+\epsc}{1-c_e/n}\big(1+\M/\symb\big)$. Subject to block/mempool limits, taking $\symb\gg\M$ drives the metadata term down; the intrinsic floor $\frac{1+\epsc}{1-c_e/n}$ remains. Thus, in practice, rateless coding approaches the bandwidth efficiency of MDS for payloads in the order of several KBs, where the extra metadata is amortized, while providing better end-to-end latency (Figure~\ref{fig:latency}).
    \end{itemize}
    Therefore, the optimal strategy for transactions can be heavily influenced by metadata costs. The next sections calculate the best strategy for reduced bandwidth depending on the approach, $\M_h, \M_s,\text{ and } S$, along with the failure budget and censorship-tolerance. We evaluate concrete implementations of each variant in Section~\ref{sec:eva}.
    \subsection{Optimal Submission Strategies}
    \label{sec:opt-sub}
    A sender's goal is to select the optimal submission strategy that minimizes total bandwidth cost while meeting their desired probabilistic liveness guarantee (i.e., a failure probability of at most $\delta$). The optimal configuration for each strategy is defined by the solution to the following optimization problems. Theorem~\ref{thm:naive_replication_optimal} shows that, for naive replication, the only parameter to optimize is the number of lanes $m$, aiming for at least one successful transmission.
    
    \begin{theorem}[Optimal Configuration for Naive Replication]
    \label{thm:naive_replication_optimal}
    To minimize bandwidth using naive replication, a sender must find the optimal number of lanes $m_{\text{opt}}$. This is the smallest integer $m$ that satisfies the probabilistic constraint:
    \[
        \Pr[H < 1] \le \delta, \quad \text{where } H \sim \mathrm{Hypergeom}(n, n-c_e, m).
    \]
    A conservative closed-form approximation for $m_{\text{opt}}$ is given by:
    \[
        \boxed{ m_{\text{opt}} \ge \left(\frac{b+\sqrt{b^2+4c_r}}{2c_r}\right)^{\!2} }
    \]
    where $c_r=1-c_e/n$ is the assumed honest validator ratio and $b = \sqrt{2c_r\ln(1/\delta)}$.
    \end{theorem}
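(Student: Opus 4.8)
The plan is to decompose the argument into three steps: (i) reduce the naive-replication liveness requirement to a single hypergeometric tail constraint in the lone free parameter $m$; (ii) convert that constraint into a clean sufficient inequality via a conservative concentration/normal-approximation bound; and (iii) solve the inequality in closed form. For step~(i): since under naive replication each addressed lane carries the entire transaction, a slot is successful the instant \emph{one} honest addressed lane publishes, so this is the threshold-$1$ specialization of the liveness analysis (cf.\ Lemma~\ref{lem:liveness-one}, with decoding trivial) and is read directly off the $L_{\mathrm{min}}=\M_h+S$ entry of Proposition~\ref{prop:bytes_comparison}(a). Hence the per-slot failure probability is exactly $\Pr[H=0]$, the probability that all $m$ sampled lanes fall among the $c_e$ effectively censored ones, with $H\sim\mathrm{Hypergeom}(n,n-c_e,m)$. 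Adjoining a lane to the sample can only decrease this, so $\Pr[H=0]$ is non-increasing in $m$ and a unique smallest admissible $m=m_{\mathrm{opt}}$ exists; since naive published bytes $m(\M_h+S)$ are increasing in $m$, minimizing bandwidth is exactly minimizing $m$, which yields the stated optimization problem.

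For step~(ii), I would produce the boxed closed form as a \emph{sufficient} condition. Writing $c_r=1-c_e/n$, we have $\mathbb{E}[H]=c_r m$ and $\mathrm{Var}[H]\le c_r m$ (discarding the $(1-c_r)$ factor and the finite-population correction, both in the safe direction). A hypergeometric count is a sum of negatively associated $\{0,1\}$ variables, so it admits the same one-sided tail control as a binomial; combining a Gaussian-type lower-tail bound with the quantile estimate $\Phi^{-1}(1-\delta)\le\sqrt{2\ln(1/\delta)}$, the event $\{H=0\}$ has probability at most $\delta$ whenever $\mathbb{E}[H]-1\ge\sqrt{2\ln(1/\delta)}\,\sqrt{\mathrm{Var}[H]}$. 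Substituting the variance bound, a sufficient condition is
\[
  c_r m - b\sqrt{m} - 1 \ \ge\ 0,\qquad b=\sqrt{2c_r\ln(1/\delta)} .
\]
Because every inequality in this chain only enlarges the estimate of the true failure probability, any $m$ meeting this inequality meets the exact constraint $\Pr[H<1]\le\delta$.

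For step~(iii), set $x=\sqrt{m}$, so the condition reads $c_r x^2-bx-1\ge 0$, a convex quadratic whose unique positive root is $x^\star=\frac{b+\sqrt{b^2+4c_r}}{2c_r}$; it therefore holds for every $x\ge x^\star$, i.e.\ for every $m\ge(x^\star)^2$, which is exactly the boxed bound. Since such an $m$ is sufficient but not necessarily minimal, it upper-bounds $m_{\mathrm{opt}}$ — hence the ``conservative'' qualifier and the ``$\ge$'' in the statement. I would close with the caveat that $m$ must be an integer in $\{1,\dots,n\}$: if the boxed value exceeds $n$, no admissible $m$ satisfies the budget and naive replication cannot realize the chosen $(\delta,c_e)$ pair.

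The step I expect to be the main obstacle is (ii): choosing a tail inequality that is simultaneously valid for the hypergeometric, sharp enough to yield the $\sqrt{m}$ correction rather than only the cruder linear bound $m\ge\ln(1/\delta)/c_r$ (which a plain Chernoff/KL estimate $\Pr[H=0]\le(1-c_r)^m$ gives), and structurally simple enough that the reduction collapses to a quadratic in $\sqrt{m}$. The care is in orienting every simplification — the variance bound, the Gaussian quantile, the hypergeometric-to-binomial comparison — so that the resulting prescription is genuinely sufficient; once that chain is fixed, the remaining work is the quadratic formula together with the monotonicity and integrality bookkeeping.
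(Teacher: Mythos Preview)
Your approach is essentially the paper's: the paper's entire proof is two sentences --- ``Cost $m(\M_h+S)$ is strictly increasing in $m$, so the optimum is the smallest $m$ satisfying the constraint. The closed-form follows from a Chernoff bound on the hypergeometric tail'' --- and your three steps (monotonicity in $m$, a Chernoff/sub-Gaussian tail bound yielding a quadratic in $\sqrt{m}$, then the quadratic formula) are exactly a fleshed-out version of that sketch. In fact you supply substantially more detail than the paper does, including the reduction to threshold $1$ via Lemma~\ref{lem:liveness-one} and Proposition~\ref{prop:bytes_comparison}(a), the explicit quadratic $c_r x^2 - bx - 1\ge 0$, and the integrality/feasibility caveat.

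One small point on step~(ii): your phrasing mixes a normal-approximation idiom (invoking $\Phi^{-1}(1-\delta)$) with a sub-Gaussian/Chernoff bound, which is slightly muddled --- the Chernoff inequality $\Pr[H\le \mu-t]\le\exp(-t^2/(2\mu))$ for hypergeometric $H$ (via Hoeffding's reduction to binomial) gives $\exp(-t^2/2)\le\delta$ directly, without passing through the Gaussian CDF. Also, for the threshold-$1$ event $\{H\le 0\}$ the tight application gives $c_r m \ge b\sqrt m$ (no ``$-1$''); the extra $-1$ you carry is harmless additional slack and is what makes the formula the $k=1$ instance of the general $c_r m - b\sqrt m - k\ge 0$ pattern used in Theorems~\ref{thm:mds_optimal_prob} and~\ref{thm:rateless_optimal_prob}. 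Neither point is a genuine gap, and you rightly flag this step as the one requiring care.
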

    
    \begin{proof}
    Cost $m(M_h + S)$ is strictly increasing in $m$, so the optimum is the smallest $m$ satisfying the constraint. The closed-form follows from a Chernoff bound on the hypergeometric tail.
    \end{proof}
    MDS coding introduces the parameter $k$ (number of shares needed for reconstruction). Optimizing involves finding the best pair $(m, k)$ that minimizes cost while ensuring at least $k$ honest lanes succeed. This typically requires a numerical search over possible values of $k$, determining the minimum required $m$ for each $k$ using the constraint, as we show in Theorem~\ref{thm:mds_optimal_prob}
    \begin{theorem}[Optimal Configuration for MDS Coding]
    \label{thm:mds_optimal_prob}
    To minimize total bandwidth using an $(m, k)$ MDS code, a sender must find the optimal integer pair $(m_{\text{opt}}, k_{\text{opt}})$ that solves the optimization problem:
    \[
        \min_{m,k} \left[ m \cdot \left(\M_h + \frac{S}{k}\right) \right]
    \]
    For any choice of $k$, the number of lanes $m$ must satisfy the probabilistic constraint:
    \[
        \Pr[H < k] \le \delta, \quad \text{where } H \sim \mathrm{Hypergeom}(n, n-c_e, m).
    \]
    A conservative closed-form approximation for the required $m$ given a choice of $k$ is:
    \[
        \boxed{ m \ge \left(\frac{b+\sqrt{b^2+4c_r k}}{2c_r}\right)^{\!2} }
    \]
    where $c_r=1-c_e/n$ and $b = \sqrt{2c_r\ln(1/\delta)}$.
    \end{theorem}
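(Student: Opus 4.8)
The plan is to follow the same two-part template as Theorem~\ref{thm:naive_replication_optimal}, treating the optimization statement and the closed-form prescription in turn. For the optimization statement, I would first fix the reconstruction threshold $k$ and note that the objective $m(\M_h+S/k)$ is strictly increasing in $m$ for fixed $k$; hence the best number of lanes at that $k$ is exactly the least integer $m$ satisfying the feasibility constraint $\Pr[H<k]\le\delta$ with $H\sim\mathrm{Hypergeom}(n,n-c_e,m)$ (the failure event of Lemma~\ref{lem:liveness-one} when each lane contributes one share and $\delc=0$). The outer problem then reduces to minimizing the finite family $g(k):=m(k)\,(\M_h+S/k)$ over $k\in\{1,\dots,n\}$ with $k\le m(k)\le n$. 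I would observe that $g$ need not be unimodal — increasing $k$ shrinks the per-share term $S/k$ but drives $m(k)$ up — so there is no clean closed form for $k_{\text{opt}}$ and a numerical sweep over $k$ is used, which is precisely what the theorem asserts.

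For the closed-form lower bound on $m$ given $k$, I would bound the lower tail of $H$ by a multiplicative Chernoff inequality. The mean is $\mu:=\mathbb{E}[H]=m(n-c_e)/n=m c_r$, and since sampling without replacement is at least as concentrated as the corresponding binomial (Hoeffding), $\Pr[H\le(1-\gamma)\mu]\le\exp(-\gamma^2\mu/2)$ for $\gamma\in(0,1)$. To enforce $\Pr[H<k]=\Pr[H\le k-1]\le\delta$ it suffices, conservatively, to enforce $\Pr[H\le k]\le\delta$; setting $(1-\gamma)\mu=k$ (which requires the regime $\mu>k$) gives $\gamma\mu=m c_r-k$ and $\gamma^2\mu=(m c_r-k)^2/(m c_r)$, so the reliability requirement becomes
\[
  \exp\!\Bigl(-\frac{(m c_r-k)^2}{2\,m c_r}\Bigr)\le\delta
  \qquad\Longleftrightarrow\qquad
  (m c_r-k)^2\ \ge\ 2\,m c_r\ln(1/\delta).
\]

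The final step is to solve this inequality for $m$. Writing $x:=\sqrt{m}$ and $b:=\sqrt{2 c_r\ln(1/\delta)}$, and taking the nonnegative square root of both sides (legitimate in the regime $\mu=c_r x^2>k$), the requirement becomes the quadratic inequality $c_r x^2-b x-k\ge0$, whose larger root is $x_+=(b+\sqrt{b^2+4 c_r k})/(2 c_r)$. Thus every $m\ge x_+^2=\bigl((b+\sqrt{b^2+4 c_r k})/(2 c_r)\bigr)^2$ is feasible, which is the boxed bound; and at $m=x_+^2$ one has $c_r m=b x_++k>k$, so the regime assumption $\mu>k$ is automatically satisfied, and $k=1$ recovers Theorem~\ref{thm:naive_replication_optimal}. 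The main obstacle is the concentration step: the exact hypergeometric lower tail has no elementary closed form, so the argument must lean on a without-replacement Chernoff/Hoeffding bound and accept the resulting slack — together with the $k\mapsto k-1$ rounding, this is why the prescription is only a \emph{conservative} approximation rather than the exact minimal $m$.
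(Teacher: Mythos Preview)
Your proposal is correct and follows exactly the approach the paper intends: the paper's own proof is a two-line ``analogous to Theorem~\ref{thm:naive_replication_optimal}: for each $k$, find the smallest $m$ satisfying the constraint, then minimize over $k$,'' and the closed form is attributed there to ``a Chernoff bound on the hypergeometric tail.'' You have simply made both steps explicit, including the Hoeffding reduction to the binomial Chernoff bound and the quadratic-in-$\sqrt{m}$ algebra that yields the boxed expression; there is no substantive divergence.
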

    \begin{proof}
Analogous to Theorem~\ref{thm:naive_replication_optimal}: for each $k$, find the smallest $m$ satisfying the constraint, then minimize over $k$.
\end{proof}
    Rateless coding offers more flexibility by optimizing the number of lanes $m$, symbols per bundle $s$, and total symbols needed $K$. The goal is to find the tuple $(m, s, K)$ minimizing cost while ensuring at least $\lceil K/s \rceil$ honest lanes succeed. This involves a multi-variable search over $s$ and $K$, determining the necessary $m$ for each pair via the constraint. Note the constraint adjusts the target probability to account for the intrinsic decoding failure probability $\delta_{code}$ of the rateless scheme itself. We show this in Theorem~\ref{thm:rateless_optimal_prob}.
    \begin{theorem}[Optimal Configuration for Rateless Coding]
\label{thm:rateless_optimal_prob}
To minimize bandwidth using the rateless scheme, a sender must find the optimal tuple $(m_{\text{opt}}, s_{\text{opt}}, {\symb}_{\text{opt}})$ that solves the optimization problem:
\[
    \min_{m,s,\symb} \left[ m \cdot \left(\M_h + s \cdot (\M_s + \symb)\right) \right]
\]
where the decode threshold is determined by the symbol size:
\[
  \Kneed(\symb) \;:=\; \left\lceil (1+\epsc)\,\frac{S}{\symb}\right\rceil.
\]
For any choice of $(s, \symb)$, the number of lanes $m$ must satisfy the probabilistic constraint:
\[
    \Pr\bigl[H < \lceil\Kneed(\symb)/s\rceil\bigr] \le \delta - \delc, \quad \text{where } H \sim \mathrm{Hypergeom}(n, n-c_e, m).
\]
A conservative closed-form approximation for the required $m$ is given by:
\[
    \boxed{ m \ge \left(\frac{b+\sqrt{\,b^2+4c_r\,\Kneed'(\symb)\,}}{2c_r}\right)^{\!2} }
\]
where $\Kneed'(\symb) = \lceil\Kneed(\symb)/s\rceil$ is the required number of honest lanes, $c_r=1-c_e/n$, and $b = \sqrt{2c_r\ln(1/(\delta-\delc))}$.
\end{theorem}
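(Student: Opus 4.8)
The plan is to assemble the statement from three ingredients already available: the bandwidth accounting of Proposition~\ref{prop:bytes_comparison}(c), the single-slot inclusion bound of Lemma~\ref{lem:liveness-one} together with the resampling domination of Lemma~\ref{lem:multi-slot-resampling}, and a Chernoff-type tail bound for the hypergeometric distribution. First I would fix the objective and the feasibility structure: a configuration addressing $m$ lanes with $s$ symbols per bundle and symbol length $\symb$ publishes $L_{\mathrm{pub}} = m(\M_h + s(\M_s + \symb))$ bytes by Proposition~\ref{prop:bytes_comparison}(c), and the rateless parameterization of Section~\ref{sec:model} pins the decode threshold to $\Kneed(\symb) = \lceil (1+\epsc)S/\symb\rceil$ once $\symb$ is chosen. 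Since an honest sender uses disjoint index sets across lanes, each honest addressed lane that publishes contributes $s$ previously unseen verified indices, so a slot admits a decode attempt exactly when at least $\Kneed'(\symb) := \lceil \Kneed(\symb)/s\rceil$ of the $m$ addressed lanes are honest; conditioned on that, decoding succeeds with probability at least $1-\delc$. I would keep $(s,\symb)$ as outer variables and reduce, for each fixed pair, to choosing the smallest feasible $m$; the outer minimization over $(s,\symb)$ is a finite numerical search admitting no closed form, which is why only the per-$(s,\symb)$ bound on $m$ is boxed.

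Next I would pin down the constraint. By Lemma~\ref{lem:liveness-one}, per-slot inclusion succeeds with probability at least $(1-\delc)\,\Pr[H \ge \Kneed'(\symb)]$ with $H\sim\mathrm{Hypergeom}(n,n-c_e,m)$, so the per-slot failure probability is at most $\delc + (1-\delc)\Pr[H < \Kneed'(\symb)] \le \delc + \Pr[H < \Kneed'(\symb)]$; demanding the conservative upper bound be at most the budget $\delta$ gives exactly $\Pr[H < \Kneed'(\symb)] \le \delta - \delc$. Lemma~\ref{lem:multi-slot-resampling} certifies that, under per-slot resampling, this is the quantity governing the geometric tail on slots-to-inclusion, so it is the right thing to constrain; and since $L_{\mathrm{pub}}$ is strictly increasing in $m$ for fixed $(s,\symb)$, the optimum picks the smallest $m$ satisfying it.

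For the closed form I would use that $\mathbb{E}[H] = m(n-c_e)/n = c_r m$ with $c_r = 1 - c_e/n$, and that the hypergeometric lower tail obeys the same multiplicative Chernoff bound as the binomial (invoking Hoeffding's result that sampling without replacement is no less concentrated than sampling with replacement), i.e. $\Pr[H \le (1-\theta)c_r m] \le \exp(-\theta^2 c_r m/2)$ for $\theta\in(0,1)$. In the regime $c_r m > \Kneed'(\symb)$ — the same ``effective rate $c_r$'' regime underlying the overhead floor of Corollary~\ref{cor:overhead_floor_comparison} — I would set $(1-\theta)c_r m = \Kneed'(\symb)$, so that, using $\{H<\Kneed'\}\subseteq\{H\le\Kneed'\}$, $\Pr[H < \Kneed'(\symb)] \le \exp\!\big(-\tfrac12(1-\Kneed'(\symb)/(c_r m))^2 c_r m\big)$. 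Forcing the right-hand side to be at most $\delta-\delc$, taking logarithms with $b := \sqrt{2 c_r \ln(1/(\delta-\delc))}$, and substituting $x=\sqrt m$ turns the requirement into $(c_r x^2 - \Kneed'(\symb))^2 \ge b^2 x^2$; taking the positive square root yields the quadratic $c_r x^2 - b x - \Kneed'(\symb)\ge 0$, whose larger root is $x = (b + \sqrt{b^2 + 4 c_r \Kneed'(\symb)})/(2c_r)$, and squaring gives the boxed bound. The estimate is conservative twice over: once from folding the decoder-failure event in additively rather than multiplicatively, and once from the Chernoff overshoot of the true hypergeometric tail.

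I expect the only genuine obstacle to be the passage from the binomial Chernoff bound to the hypergeometric lower tail; I would resolve it by citing Hoeffding's (or Chvátal/Skala's) comparison lemma rather than re-deriving it. The remainder is bookkeeping: keeping the inequality directions straight while inverting the exponential into a quadratic in $\sqrt m$, treating the $\lceil\cdot\rceil$ operators conservatively, and explicitly recording the side condition $c_r m > \Kneed'(\symb)$ under which the lower-tail bound — and hence the closed form — is meaningful.
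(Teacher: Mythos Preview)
Your proposal is correct and follows the same approach as the paper: treat $(s,\symb)$ as outer search variables, for each fixed pair pick the minimal $m$ meeting the hypergeometric constraint, and obtain the boxed expression from a Chernoff-type tail bound. The paper's own proof is a two-sentence sketch that merely asserts the search structure and that the closed form is a conservative estimate; your derivation of the $\delta-\delc$ budget split from Lemma~\ref{lem:liveness-one} and the explicit quadratic-in-$\sqrt{m}$ inversion of the Hoeffding/Chernoff bound simply fill in the details the paper omits (its proof of Theorem~\ref{thm:naive_replication_optimal} likewise says only ``the closed-form follows from a Chernoff bound on the hypergeometric tail'').
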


\begin{proof}
The optimization is a multi-variable search over the symbols per bundle $s$ and the symbol size $\symb$. The decode threshold $\Kneed(\symb)$ is not an independent variable but is determined by the choice of $\symb$ via the rateless code parameterization. For each candidate pair $(s, \symb)$, the minimum required $m$ is the smallest integer satisfying the probabilistic constraint, for which the provided closed-form bound serves as a conservative estimate. The optimal tuple is the one yielding the minimum total cost across this search space.
\end{proof}
    
    These theorems provide the mathematical basis for determining the most bandwidth-efficient configuration for each submission strategy under given specific system's conditions $(n)$ and  user requirements ($c_e, \delta, S$). Section~\ref{sec:eva} provides concrete comparisons in practice, depending on each of the relevant user parameters.
    
    \subsection{Comparison to the Deterministic Lower Bound}
    To contextualize the performance of these probabilistic strategies, we compare them to the fundamental limit of any deterministic coding scheme.
    
    \begin{theorem}[Information-theoretic lower bound, deterministic threshold]
    \label{thm:it_lower_bound}
    Assume decoding must succeed from any set of $n-c_e$ lanes and fail from any smaller set. Let lane $i$ contribute a total length $\ell_i$ (possibly via multiple shares), and set $L:=\sum_{i=1}^n \ell_i$. Then
    \[
    L \ \ge\ \frac{n}{n-c_e}\,S,
    \]
    so the overhead factor $L/S$ is at least $1/(1-c_e/n)$.
    \end{theorem}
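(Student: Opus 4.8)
The plan is to run a standard cut-set / counting argument on the erasure code. Fix any set $T\subseteq\V$ of the $c_e$ censored lanes, so the decoder must reconstruct $M$ (hence in particular recover $S$ bits of information about the payload, via the fixed injective embedding $M=(\sigma\parallel\Payload)$) from the remaining $n-c_e$ lanes alone. First I would observe that this means the concatenated contribution of \emph{any} $n-c_e$ lanes must have total length at least $S$, since a decoding map that outputs $M\in\{0,1\}^{S'}$ with $S'\ge S$ from an input of fewer than $S$ bits cannot be injective on messages, contradicting the requirement that decoding succeeds (outputs the correct $M$) for every message. Formally, for every $(n-c_e)$-subset $A\subseteq\V$ we get $\sum_{i\in A}\ell_i \ge S$.

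Next I would sum this inequality over all $\binom{n}{n-c_e}$ subsets $A$ of size $n-c_e$. On the right-hand side we get $\binom{n}{n-c_e}\,S$. On the left-hand side, each lane $i$ appears in exactly $\binom{n-1}{n-c_e-1}$ of these subsets, so the sum equals $\binom{n-1}{n-c_e-1}\sum_{i=1}^n \ell_i = \binom{n-1}{n-c_e-1}\,L$. Rearranging,
\[
L \ \ge\ \frac{\binom{n}{n-c_e}}{\binom{n-1}{n-c_e-1}}\,S \ =\ \frac{n}{n-c_e}\,S,
\]
using the identity $\binom{n}{k}/\binom{n-1}{k-1}=n/k$ with $k=n-c_e$. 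Dividing by $S$ gives the overhead bound $L/S\ge 1/(1-c_e/n)$, as claimed. (An even shorter route: since $\sum_{i\in A}\ell_i\ge S$ for all such $A$, in particular the $n-c_e$ lanes with the smallest $\ell_i$ already sum to at least $S$; averaging then forces $L\ge \frac{n}{n-c_e}S$. Either phrasing works; I would present the symmetric summation since it is cleanest.)

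The only real subtlety — and the step I would be most careful about — is justifying the claim that decoding from strictly fewer than $S$ bits is impossible, i.e.\ the information-theoretic ``$S$ bits in, $S$ bits out'' lower bound. This needs the hypothesis that decoding must succeed \emph{for every} message $M$ (not just with high probability over some message distribution), which is exactly what the deterministic-threshold assumption in the statement gives us: there is a fixed decoding function that is correct on all inputs. Under that assumption the decoder restricted to lane set $A$ is a surjection from $\{0,1\}^{\sum_{i\in A}\ell_i}$ onto (a set containing) all $2^{S_{\mathrm{pl}}}$ payloads — more precisely onto all $M$, and the map $M\mapsto\Payload$ is onto $\{0,1\}^{S_{\mathrm{pl}}}$ — so $\sum_{i\in A}\ell_i \ge S_{\mathrm{pl}}$, and folding in $|\sigma|$ gives $\ge S$. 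I would state this as a one-line pigeonhole remark. No part of the argument uses the structure of the code (MDS, rateless, or otherwise) or any computational assumption; it is purely combinatorial, which is the point of the ``information-theoretic'' qualifier.
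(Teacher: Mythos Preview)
Your proof is correct and follows essentially the same symmetric double-counting argument as the paper: establish $\sum_{i\in A}\ell_i\ge S$ for every $(n-c_e)$-subset $A$, sum over all such subsets, and use $\binom{n}{k}/\binom{n-1}{k-1}=n/k$ with $k=n-c_e$. Your added justification of the per-subset bound via injectivity/pigeonhole is more explicit than the paper (which asserts it without comment); the one small wobble is the sentence about ``folding in $|\sigma|$'' to pass from $S_{\mathrm{pl}}$ to $S$, which goes the wrong way---the clean statement is simply that the decoder restricted to $A$ must be injective on $M\in\{0,1\}^{S}$, giving $\sum_{i\in A}\ell_i\ge S$ directly.
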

    \begin{proof}
    Let $k:=n-c_e$. For every subset $J$ of lanes of size $k$, the total contributed length satisfies $\sum_{j\in J}\ell_j\ge S$. Summing over all $\binom{n}{k}$ such subsets and noting that each $\ell_i$ appears exactly $\binom{n-1}{k-1}$ times yields $\binom{n-1}{k-1}L\ge\binom{n}{k}S$. Using $\binom{n}{k}=\frac{n}{k}\binom{n-1}{k-1}$ gives $L\ge \frac{n}{k}S=\frac{n}{n-c_e}S$.
    \end{proof}
    The information-theoretic lower bound derived in Theorem~\ref{thm:it_lower_bound} for any deterministic scheme ($1/(1-c_e/n)$) matches the asymptotic overhead floor previously calculated for probabilistic MDS and (up to $1+\epsilon$) Rateless coding strategies (Corollary~\ref{cor:overhead_floor_comparison}). This demonstrates that these probabilistic approaches are asymptotically optimal in terms of bandwidth overhead.
    \section{Evaluation}
\label{sec:eva}

We evaluate \proposal through a combination of analytical computation and microbenchmarks. Our evaluation addresses three primary questions: (1) How do the bandwidth overheads of naive replication, MDS coding, and rateless coding compare across different payload sizes and censorship assumptions? (2) What is the computational cost of encoding and decoding, and does it introduce prohibitive latency? (3) How do the three approaches differ in their privacy properties, specifically the adversary's probability of early payload reconstruction?

\subsection{Experimental Setup}

Unless otherwise specified, we use the following default parameters throughout our evaluation. We consider a validator set of $n = 256$ validators with an effective censorship ratio of $c_e/n = 0.125$. The target per-slot failure probability is $\delta = 10^{-9}$. 

 For the rateless scheme, we set the coding overhead parameter $\varepsilon = 0.05$, per-bundle header metadata $M_h = 200$ bytes (e.g., 32 bytes for the transaction identifier, 32 bytes for the commitment, 64 bytes for the header signature, 64 bytes for the bundle signature, and a small amount for lane index and other header fields), per-symbol metadata $M_s = 8$ bytes (symbol index), and symbol size $\ell_{\mathrm{sym}} = 256$ bytes.


Individual figures may use different parameter values to illustrate specific phenomena; deviations from the defaults are noted in figure captions.

\subsection{Bandwidth Overhead}

We begin by examining the core bandwidth efficiency claims of \proposal. Figure~\ref{fig:overhead_payload} presents the overhead factor $L_{\mathrm{pub}}/S$ as a function of payload size for all three submission strategies, where $L_{\mathrm{pub}}$ denotes total published bytes and $S$ is the original payload size.

\begin{figure}[t]
    \centering
    \includegraphics[width=\columnwidth]{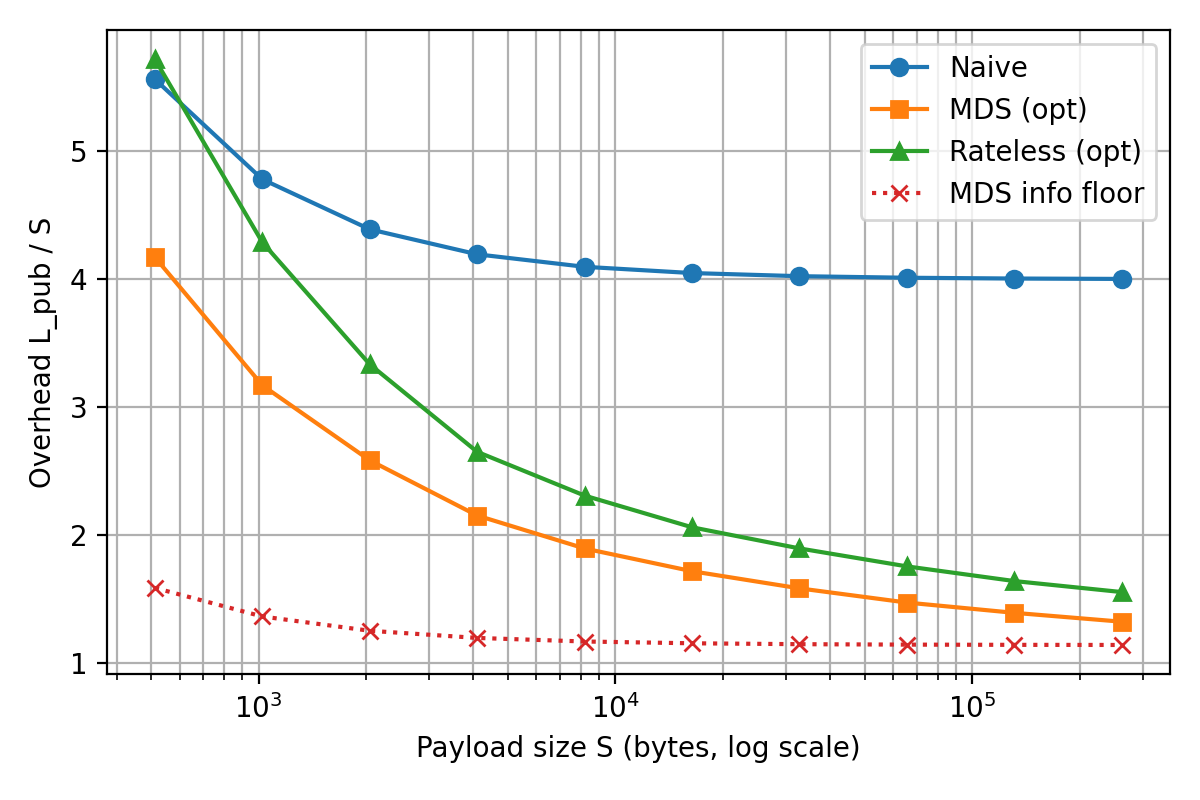}
    \caption{Bandwidth overhead factor as a function of payload size. The dotted line shows the information-theoretic lower bound $1/(1-c_e/n)$ from Theorem~\ref{thm:it_lower_bound}, for $n = 256$, $c_e/n = 0.125$, $\delta = 10^{-9}$.}    \label{fig:overhead_payload}
\end{figure}

Several observations emerge from this figure. First, naive replication maintains a nearly constant overhead across all payload sizes, reflecting the fact that the replication factor $m$ is determined solely by the censorship tolerance requirement and is independent of $S$. Second, both MDS and rateless coding exhibit decreasing overhead as payload size increases, asymptotically approaching the information-theoretic floor of $1/(1-c_e/n)$ established in Theorem~\ref{thm:it_lower_bound}. Third, for small payloads, metadata overhead causes all three approaches to converge, and naive replication can be competitive due to its lower per-bundle metadata requirements.

The rateless scheme incurs slightly higher overhead than MDS due to the $(1+\varepsilon)$ factor and additional per-symbol metadata $M_s$. However, this gap narrows for larger payloads as the metadata cost is amortized.

\subsection{Single-Slot Success Probability}

The liveness guarantees of each approach depend critically on the relationship between the number of addressed lanes $m$ and the probability of successful inclusion in a single slot. Figure~\ref{fig:success_comparison} compares the single-slot success probability as a function of fanout $m$ for all three strategies.

\begin{figure}[t]
    \centering
    \includegraphics[width=\columnwidth]{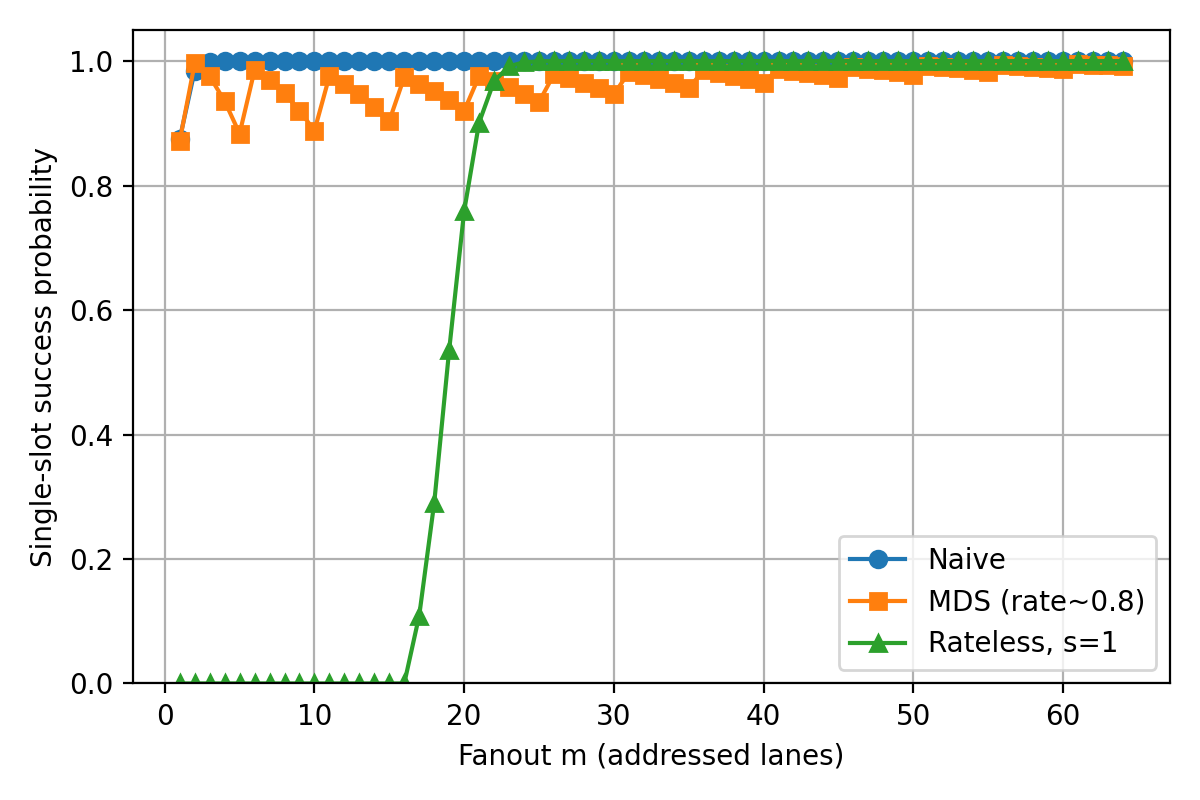}
    \caption{Single-slot success probability versus fanout $m$ for naive replication, MDS coding, and rateless coding. Parameters: $n = 256$, $c_e/n = 0.125$, $\delta = 10^{-9}$, $S = 4096$ bytes.}    \label{fig:success_comparison}
\end{figure}

Naive replication achieves high success probability at low fanout because inclusion requires only a single honest lane to publish the transaction. In contrast, both coded approaches exhibit threshold behavior: success probability remains near zero until the fanout reaches a critical value, then rises sharply to near certainty. This threshold corresponds to the point at which the expected number of honest lanes exceeds the decoding requirement ($k$ for MDS, $\lceil K/s \rceil$ for rateless).

Users can shift the rateless threshold by adjusting the symbols-per-lane parameter $s$, as illustrated in Figure~\ref{fig:s_effect}.

\begin{figure}[h]
    \centering
    \includegraphics[width=\columnwidth]{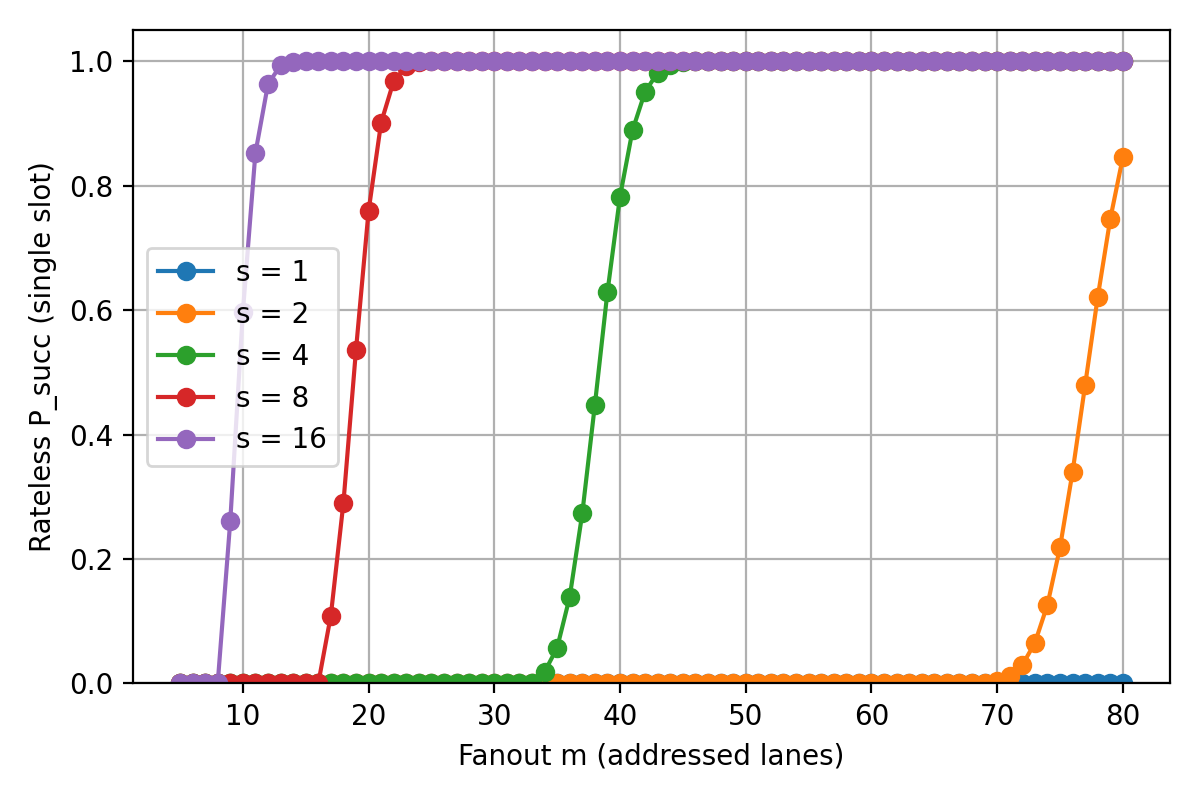}
    \caption{Effect of symbols per lane $s$ on rateless success probability. Increasing $s$ reduces the required number of honest lanes at the cost of larger bundle sizes. Parameters: $n = 256$, $c_e/n = 0.125$, $\delta = 10^{-9}$, $S = 32768$ bytes.}
    \label{fig:s_effect}
\end{figure}

Higher values of $s$ shift the success curve leftward, enabling users to achieve target success probabilities with fewer contacted lanes. This flexibility allows users to navigate the tradeoff between the number of network connections (which affects dissemination latency) and bundle size (which affects per-lane bandwidth). The optimal choice depends on application-specific constraints such as network topology and validator bandwidth limits.

\FloatBarrier

\subsection{Impact of Censorship Assumptions}

The bandwidth overhead of coded approaches depends fundamentally on the assumed censorship ratio $c_e/n$. Figure~\ref{fig:overhead_censorship} illustrates how overhead scales with censorship tolerance for rateless coding.

\begin{figure}[t]
    \centering
    \includegraphics[width=\columnwidth]{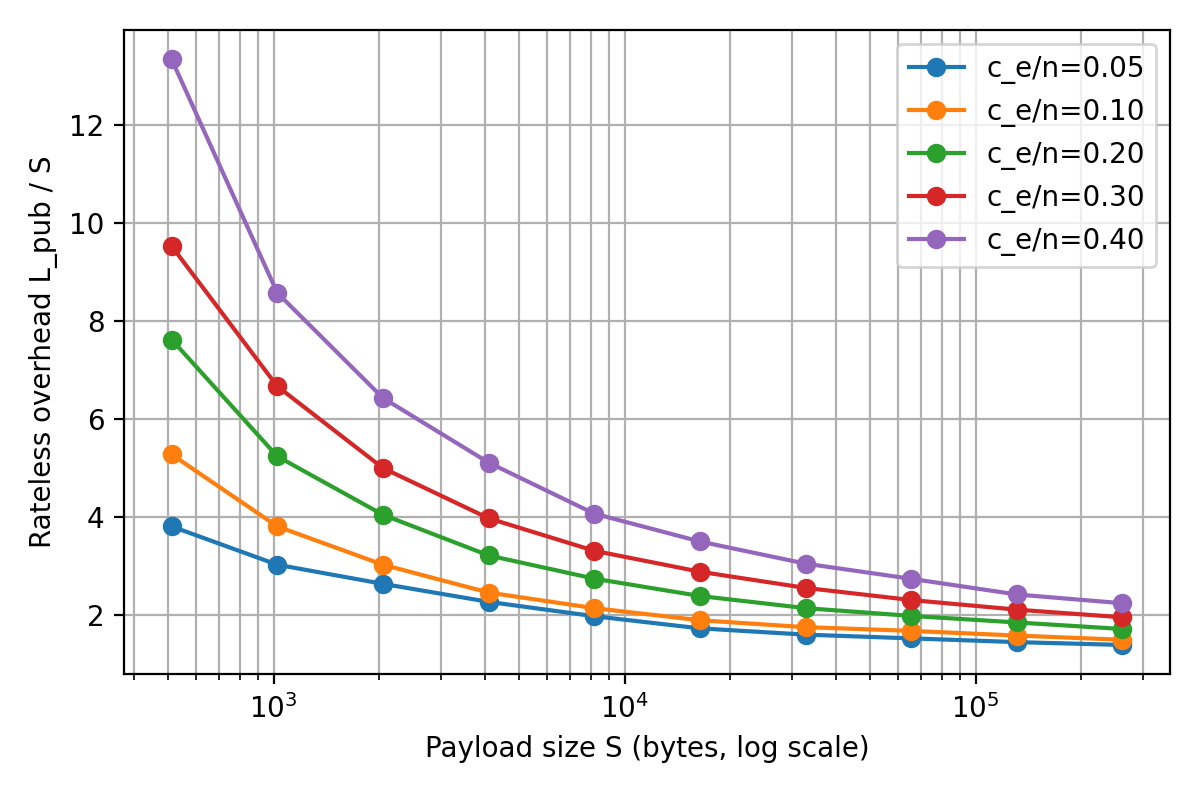}
    \caption{Rateless coding overhead as a function of payload size for varying censorship ratios $c_e/n \in \{0.05, 0.10, 0.20, 0.30, 0.40\}$. Higher censorship tolerance requires proportionally more redundancy, but overhead decreases with payload size in all cases. Parameters: $n = 256$, $\delta = 10^{-9}$.}
    \label{fig:overhead_censorship}
\end{figure}

As expected from Corollary~\ref{cor:overhead_floor_comparison}, the asymptotic overhead floor increases with censorship tolerance according to $(1+\varepsilon)/(1-c_e/n)$. Importantly, even at high censorship ratios, coded approaches substantially outperform naive replication for large payloads, as shown in Figure~\ref{fig:overhead_payload}.
\FloatBarrier
\subsection{Reliability-Cost Tradeoff}

Users can select their desired failure probability $\delta$ based on application requirements. Figure~\ref{fig:overhead_delta} shows how bandwidth overhead varies with the failure budget for each submission strategy.

\begin{figure}[t]
    \centering
    \includegraphics[width=\columnwidth]{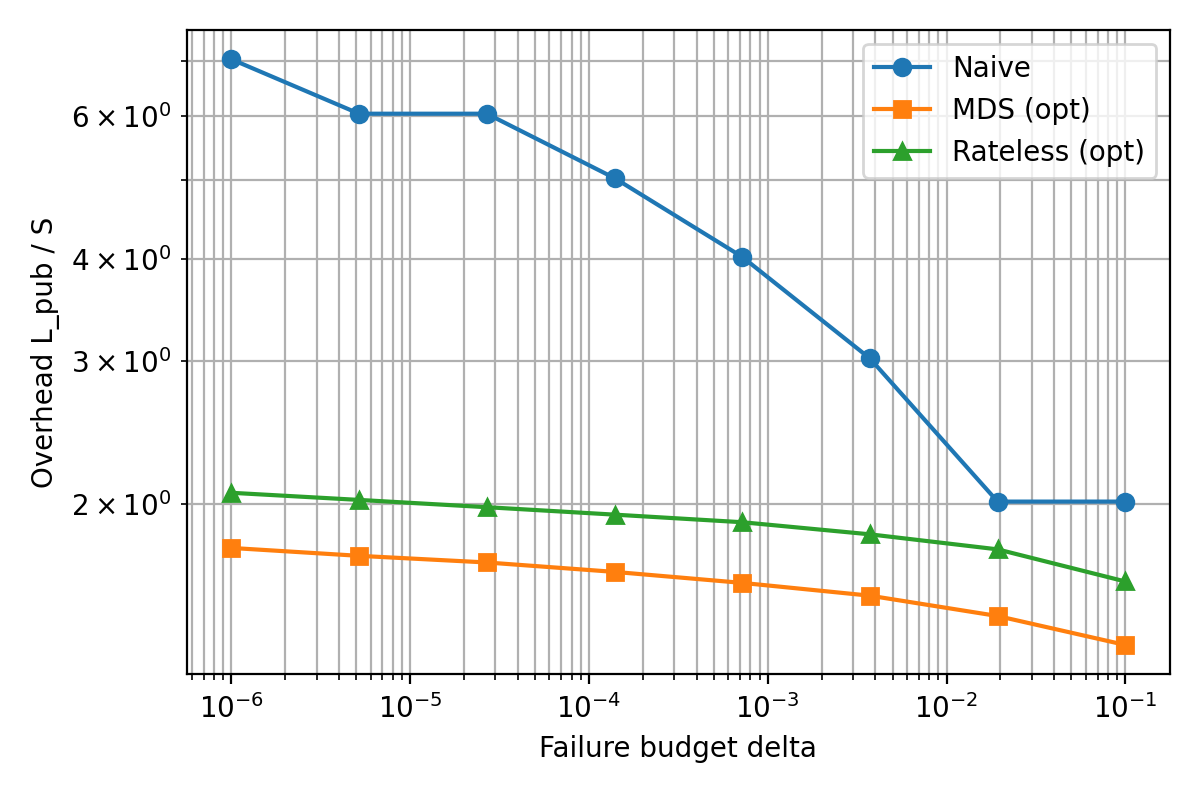}
    \caption{Bandwidth overhead as a function of failure budget $\delta$. Lower failure probability (higher reliability) requires additional redundancy. Coded approaches scale more gracefully than naive replication as reliability requirements tighten. Parameters: $n = 256$, $c_e/n = 0.125$, $S = 4096$ bytes.}
    \label{fig:overhead_delta}
\end{figure}

All approaches require increased resources as the failure budget decreases, but naive replication exhibits the steepest growth. This difference arises because naive replication must add entire copies of the payload to improve reliability, whereas coded approaches can add smaller increments of redundancy. For applications requiring high reliability, the bandwidth advantage of coding becomes more pronounced.
\FloatBarrier
\subsection{Privacy and Early Decode Probability}

A key advantage of \proposal over naive replication is the reduction in pre-inclusion information leakage. With naive replication, any addressed lane that is adversarially controlled immediately learns the full transaction payload. With coded approaches, the adversary must collect at least $k$ (MDS) or $K$ (rateless) symbols before reconstruction is possible. Figure~\ref{fig:early_decode} quantifies the adversary's probability of early decode.

\begin{figure}[t]
    \centering
    \includegraphics[width=\columnwidth]{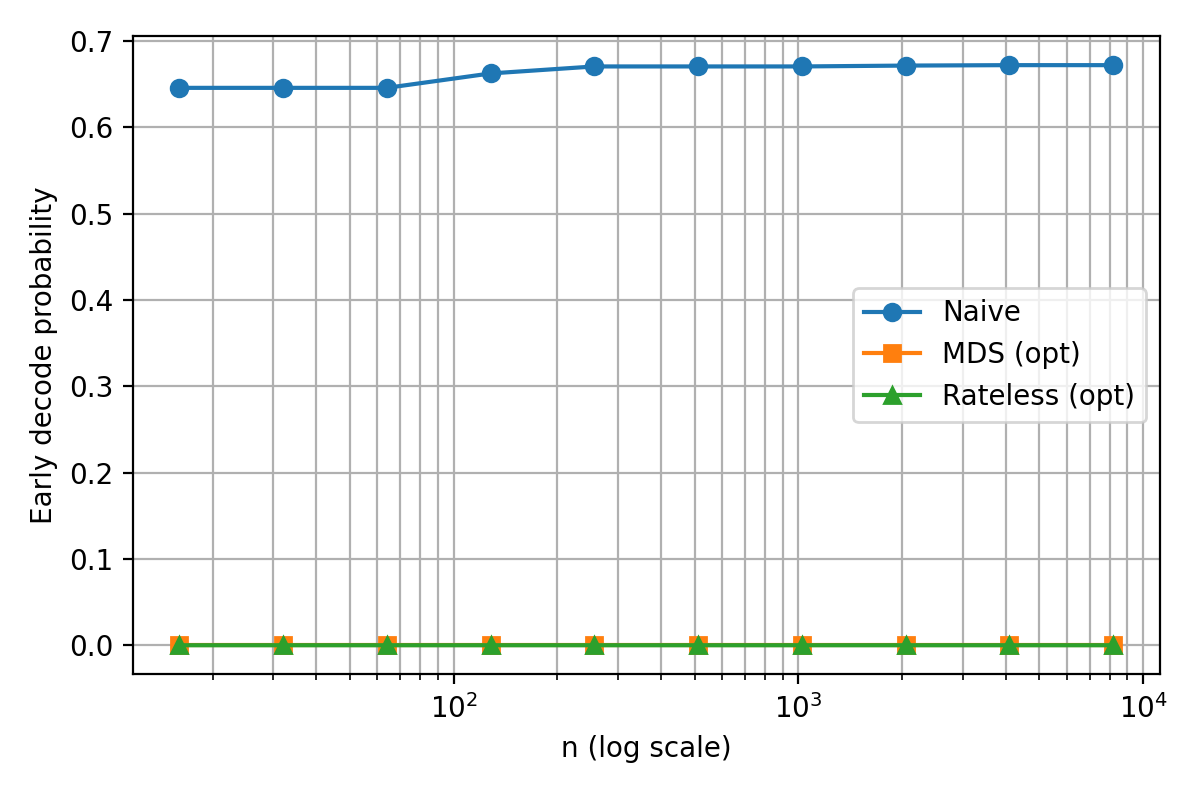}
    \caption{Early decode probability as a function of validator set size $n \in [16, 8192]$. Parameters: $c_e/n = 0.2$, $\delta = 10^{-9}$.}
    \label{fig:early_decode}
\end{figure}

Under naive replication, the adversary's early decode probability is trivial, reflecting the probability that at least one of the $m$ addressed lanes is adversarially controlled. In contrast, both MDS and rateless coding maintain early decode probability near zero across tested configurations, as shown in Figure~\ref{fig:early_decode}. This occurs because the adversary would need to control at least $k$ or $\lceil K/s \rceil$ of the addressed lanes, which is unlikely when parameters are chosen appropriately.

This privacy improvement directly translates to reduced MEV exposure. Under naive replication, a user's transaction is visible to the adversary with high probability before any honest proposer includes it, enabling front-running, sandwiching, and other extraction strategies. Coded approaches ensure that the payload remains hidden until honest validators collectively publish sufficient symbols for decoding, at which point inclusion is imminent.

\subsection{Computational Overhead}

While the primary focus of \proposal is bandwidth efficiency, computational overhead must also be practical for real-world deployment. Figure~\ref{fig:latency} presents latency measurements for encoding and decoding operations.

\begin{figure}[t]
    \centering
    \includegraphics[width=\columnwidth]{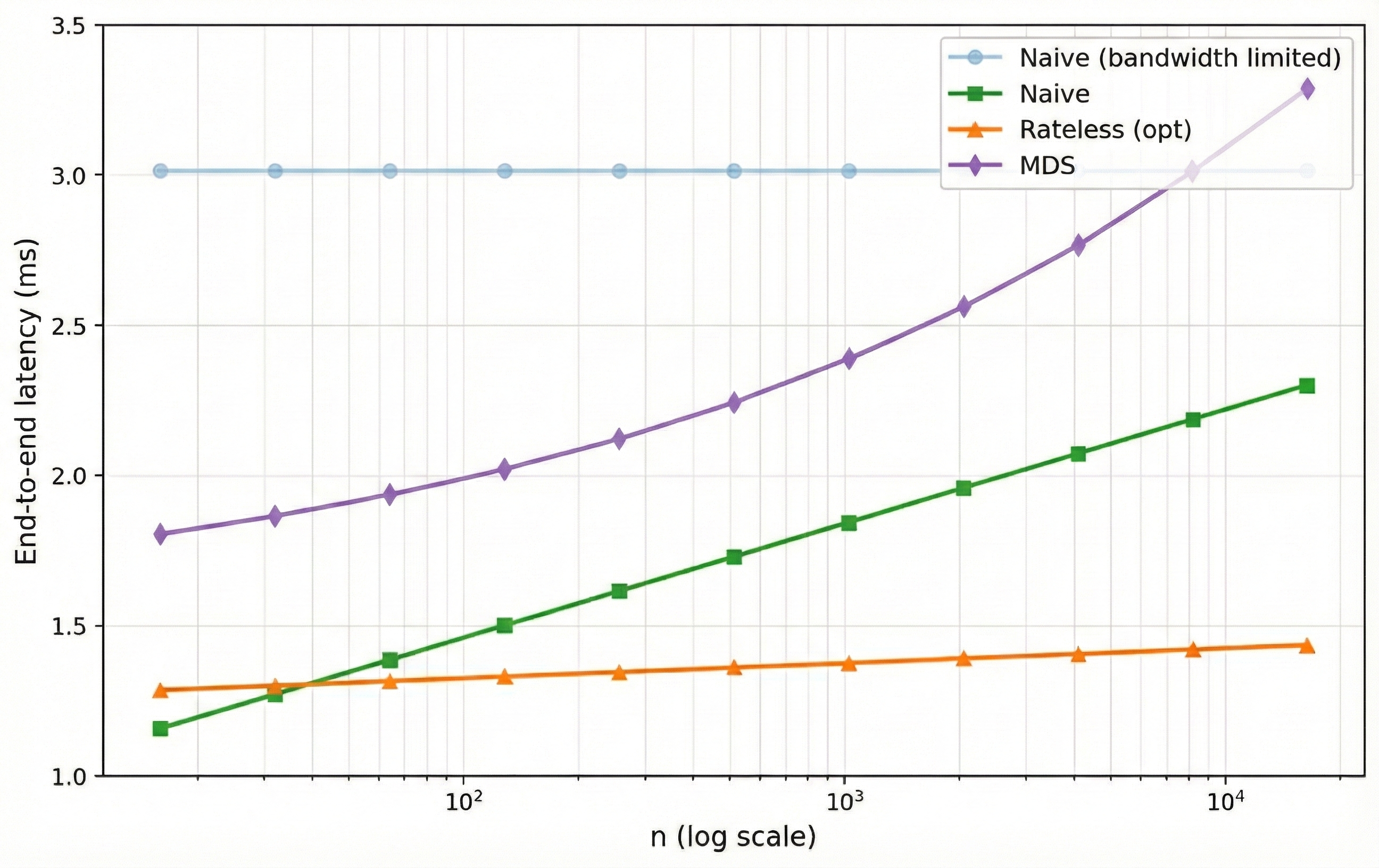}
    \caption{End-to-end latency as a function of validator set size $n \in [16, 8192]$ for naive replication (with and without bandwidth limits), MDS coding, and rateless coding. Rateless coding achieves lower latency than naive replication despite additional encoding overhead, as bandwidth savings from reduced data transmission dominate computational costs. Parameters: $c_e/n = 0.2$, $\delta = 10^{-9}$.}
    \label{fig:latency}
\end{figure}

Naive replication serves as a baseline with effectively zero encoding overhead, requiring only signature generation and serialization. Rateless coding introduces additional computational work for symbol and signature generation and verification. Interestingly, the rateless scheme shows lower end-to-end latency than naive replication in our measurements. This counterintuitive result arises because the bandwidth savings from coding dominate the additional computational overhead of symbol generation, bundle construction, and encoding. The reduced data volume leads to faster network transmission, more than compensating for encoding costs. MDS coding, while providing better bandwidth utilization, degrades in latency with the number of proposers, despite MDS being SIMD-accelerated in our prototype, justifying our choice for rateless encoding as the recommended, scalable protocol version for \proposal.

\subsection{Case Study: Real World Workloads}

To ground our analysis in real-world systems, we consider transaction dissemination patterns observed in Filecoin, an MCP blockchain where messages are broadcast via gossip and proposers independently select from their mempools. Empirical observations indicate average replication factors of 3--4$\times$, with some transactions replicated up to 9$\times$ due to uncoordinated submission and deduplication failures.

\begin{figure}[t]
    \centering
    \includegraphics[width=\columnwidth]{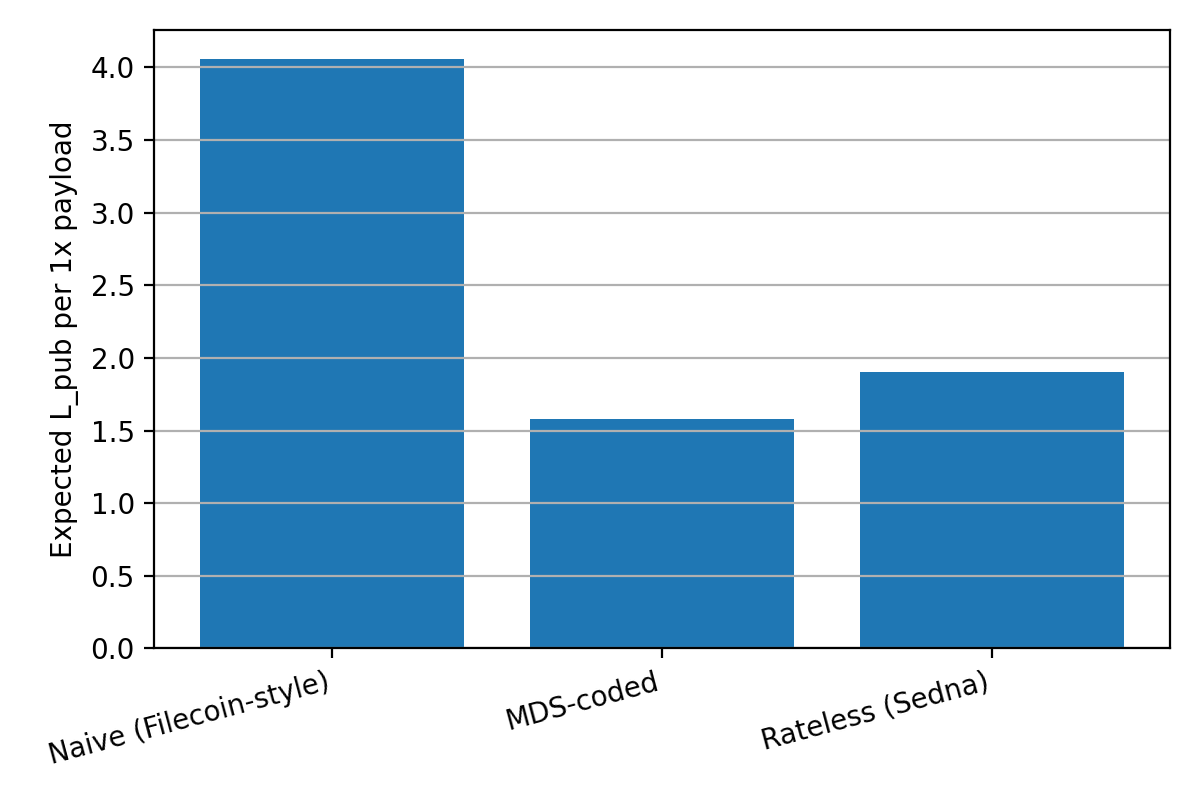}
    \caption{Comparison of replication factors with naive replication and Sedna's MDS and rateless variants, for the same level of censorship resistance. Parameters: $n = 256$, $c_e/n = 0.125$, $\delta = 10^{-9}$.}
    \label{fig:filecoin}
\end{figure}

For systems that charge users for all duplicates, the savings from coded approaches accrue directly to users. Under models where the system absorbs duplication costs, \proposal would improve network goodput for equivalent censorship resistance guarantees.

\subsection{Summary}

Our evaluation demonstrates that \proposal achieves its design goals across a range of parameters. For medium-to-large payloads, rateless coding reduces bandwidth overhead compared to naive replication while providing strong privacy guarantees against early payload reconstruction. MDS coding achieves slightly lower overhead but requires upfront commitment to code parameters. Computational costs remain practical for tested payload sizes. The flexibility of the rateless scheme allows users to navigate the trilemma of censorship resistance, latency, and cost according to their application-specific requirements.
    
    \section{Conclusion}
    \label{sec:conclusion}
    In this work we  presented Sedna, a user facing transaction dissemination protocol for MCP blockchains that addresses the fundamental trilemma between censorship resistance, low latency, and reasonable pricing. By replacing whole transaction replication with commitment bound, addressed bundles of verifiable coded symbols, Sedna enables users to navigate this trilemma according to their individual requirements for each transaction. Our analysis establishes several key results. First, we proved that Sedna achieves strong safety guarantees, including header/commitment non-malleability, deterministic resolution of symbol indices, and deterministic execution ordering by inclusion height, all under standard cryptographic assumptions. Second, we derived closed form bounds for per-slot inclusion probability under an assumed effective censor mass, giving users concrete prescriptions for lane selection and symbol allocation to meet target reliability levels. Third, we formalized until-decode privacy, bounding pre-inclusion information leakage to the bits contained in revealed symbols and demonstrating how this reduces the MEV surface compared to transparent mempool dissemination. The bandwidth analysis reveals that Sedna's rateless variant achieves an asymptotic overhead floor of $\frac{1+\varepsilon}{1-c_e/n}$, matching the information-theoretic lower bound for deterministic schemes up to the small rateless coding factor $\varepsilon$. We identified the parameter regimes where each variant proves most bandwidth-efficient, with naive replication remaining competitive for small payloads where metadata dominates and coded approaches becoming advantageous as payload size grows. Sedna is entirely user-facing and requires no protocol-level changes to the underlying MCP consensus, allowing heterogeneous submission strategies to coexist. This positions it as a practical layer that MCP deployments can adopt incrementally. Future work includes empirical evaluation across production MCP systems, integration with existing fee markets, and exploration of adaptive strategies that adjust encoding parameters based on observed network conditions and censor behavior.

    \section*{Acknowledgments}
    We would like to thank Ittai Abraham and Mahimna Kelkar for their useful input and conversations regarding this work.

    \bibliographystyle{abbrv}
    \bibliography{references}

@inproceedings{miller2016honeybadger,
  title        = {The Honey Badger of {BFT} Protocols},
  author       = {Andrew Miller and Yu Xia and Kyle Croman and Elaine Shi and Dawn Song},
  booktitle    = {ACM CCS},
  year         = {2016},
  url          = {https://eprint.iacr.org/2016/199.pdf}
}

@inproceedings{keidar2021dagrider,
  title = {All You Need is DAG},
  author = {Keidar, Idit and Kokoris-Kogias, Eleftherios and Naor, Oded and Spiegelman, Alexander},
  booktitle    = {PODC},
  year         = {2021},
  pages = {165–175},
}

@inproceedings{danezis2022narwhal,
    author = {Danezis, George and Kokoris-Kogias, Lefteris and Sonnino, Alberto and Spiegelman, Alexander},
    title = {Narwhal and Tusk: a DAG-based mempool and efficient BFT consensus},
    year = {2022},
    booktitle = {EuroSys},
    pages = {34–50}
}

@inproceedings{bormet2025beat,
  title={$\{$BEAT-MEV$\}$: Epochless Approach to Batched Threshold Encryption for $\{$MEV$\}$ Prevention},
  author={Bormet, Jan and Faust, Sebastian and Othman, Hussien and Qu, Ziyan},
  booktitle={34th USENIX Security Symposium (USENIX Security 25)},
  pages={3457--3476},
  year={2025}
}

@inproceedings{choudhuri2024mempool,
  title={Mempool privacy via batched threshold encryption: Attacks and defenses},
  author={Choudhuri, Arka Rai and Garg, Sanjam and Piet, Julien and Policharla, Guru-Vamsi},
  booktitle={33rd USENIX Security Symposium (USENIX Security 24)},
  pages={3513--3529},
  year={2024}
}

@article{rivaseahorse,
  title={Seahorse: Efficiently Mixing Encrypted and Normal Transactions},
  author={Riva, Ben and Sonnino, Alberto and Kokoris-Kogias, Lefteris},
    
}

@inproceedings{kokoriskogias2022bullshark,
  title = {Bullshark: DAG BFT Protocols Made Practical},
  author = {Spiegelman, Alexander and Giridharan, Neil and Sonnino, Alberto and Kokoris-Kogias, Lefteris},
  booktitle = {CCS},
  year = {2022},
  pages = {2705–2718},
}

@inproceedings{gagol2019aleph,
    author = {G\k{a}gol, Adam and Le\'{s}niak, Damian and Straszak, Damian and Swietek, Micha\l{}},
    title = {Aleph: Efficient Atomic Broadcast in Asynchronous Networks with Byzantine Nodes},
    year = {2019},
    booktitle = {AFT},
    pages = {214–228},
}

@misc{baird2016hashgraph,
  title        = {The Swirlds Hashgraph Consensus Algorithm: Fair, Fast, Byzantine Fault Tolerance},
  author       = {Leemon Baird},
  year         = {2016},
  howpublished = {Whitepaper},
  url          = {https://www.hedera.com/hh_whitepaper_v1.0-180313.pdf}
}

@inproceedings{daian2020flashboys,
    title="Flash Boys 2.0: Frontrunning in Decentralized Exchanges, Miner Extractable Value, and Consensus Instability",
    author="Daian, Philip
        and Goldfeder, Steven
        and Kell, Tyler
        and Li, Yunqi
        and Zhao, Xueyuan
        and Bentov, Iddo
        and Breidenbach, Lorenz
        and Juels, Ari",
    booktitle="IEEE S\&P",
    year="2020",
    pages="585--602"
}

@inproceedings{kelkar2020aequitas,
    title="Order-Fairness for {B}yzantine Consensus",
    author="Kelkar, Mahimna
        and Zhang, Fan
        and Goldfeder, Steven
        and Juels, Ari",
    booktitle="CRYPTO",
    year="2020",
    pages="451--480"
}

@inproceedings{kelkar2023themis,
  author = {Mahimna Kelkar and Soubhik Deb and Sishan Long and Ari Juels and Sreeram Kannan},
  title = {Themis: Fast, Strong Order-Fairness in Byzantine Consensus},
  booktitle = {CCS},
  pages = {475--489},
  year = {2023},
}

@inproceedings{mu2024speedyfair,
  title = {Separation is Good: A Faster Order-Fairness Byzantine Consensus},
  author = {Ke Mu and Bo Yin and Alia Asheralieva and Xuetao Wei},
  booktitle = {NDSS},
  year = {2024}
}

@misc{flashbotsMevBoost,
  title        = {MEV-Boost: Open Source Relay Middleware for Proposer-Builder Separation},
  author       = {{Flashbots}},
  year         = {2022},
  howpublished = {\url{https://docs.flashbots.net/}},
  note         = {Accessed 2025-11}
}

@misc{ethereumPBS,
  title        = {Proposer-Builder Separation (PBS) — Ethereum Research \& specs},
  author       = {{Ethereum Foundation}},
  year         = {2023},
  howpublished = {\url{https://ethresear.ch/t/proposer-builder-separation/}},
  note         = {Accessed 2025-11}
}

@misc{flashbotsProtect,
  title        = {Flashbots Protect RPC},
  author       = {{Flashbots}},
  year         = {2022},
  howpublished = {\url{https://docs.flashbots.net/flashbots-protect/}},
  note         = {Accessed 2025-11}
}

@misc{bloxroutePrivateTx,
  title        = {bloXroute Private Transactions},
  author       = {{bloXroute Labs}},
  year         = {2023},
  howpublished = {\url{https://docs.bloxroute.com/introduction/private-transactions/}},
  note         = {Accessed 2025-11}
}

@misc{mevBlocker,
  title        = {MEV Blocker: Private Order Flow Protection},
  author       = {{CoW Protocol et al.}},
  year         = {2023},
  howpublished = {\url{https://docs.mevblocker.io/}},
  note         = {Accessed 2025-11}
}

@misc{shutter2021whitepaper,
  title        = {Shutter: Threshold Encryption for Preventing MEV},
  author       = {{Shutter Network}},
  year         = {2021},
  howpublished = {\url{https://docs.shutter.network/}},
  note         = {Accessed 2025-11}
}

@article{budish2015fba,
  title        = {The High-Frequency Trading Arms Race: Frequent Batch Auctions as a Market Design Response},
  author       = {Eric Budish and Peter Cramton and John Shim},
  journal      = {QJE},
  year         = {2015},
  volume       = {130},
  number       = {4}
}

@misc{cow2021,
  title        = {CoW Protocol: Batch Auctions for DEX Trading},
  author       = {{CoW Protocol}},
  year         = {2021},
  howpublished = {\url{https://docs.cow.fi/}},
  note         = {Accessed 2025-11}
}

@inproceedings{luby2002lt,
  title        = {{LT} Codes},
  author       = {Michael Luby},
  booktitle    = {FOCS},
  year         = {2002},
  pages = {271},
}

@article{shokrollahi2006raptor,
  title        = {Raptor Codes},
  author       = {Amin Shokrollahi},
  journal      = {IEEE Trans. Info. Theory},
  year         = {2006},
  volume       = {52},
  number       = {6}
}

@misc{rfc6330,
  title        = {RFC 6330: RaptorQ Forward Error Correction Scheme for Object Delivery},
  author       = {Michael Luby and Amin Shokrollahi and Mark Watson and others},
  howpublished = {IETF RFC},
  year         = {2011},
  url          = {https://www.rfc-editor.org/rfc/rfc6330}
}

@misc{rfc5053,
  title        = {RFC 5053: Raptor Forward Error Correction Scheme},
  author       = {J. Peterson and others},
  howpublished = {IETF RFC},
  year         = {2007},
  url          = {https://www.rfc-editor.org/rfc/rfc5053}
}

@inproceedings{reed1960polynomial,
  title        = {Polynomial Codes over Certain Finite Fields},
  author       = {Irving S. Reed and Gustave Solomon},
  booktitle    = {J. Soc. Ind. Appl. Math.},
  year         = {1960}
}

@book{wicker1995rs,
  title        = {Reed–Solomon Codes and Their Applications},
  author       = {Stephen B. Wicker and Vijay K. Bhargava},
  publisher    = {IEEE Press},
  year         = {1995}
}

@misc{sui2022whitepaper,
  title        = {Sui: A Platform for High-Performance Smart Contracts},
  author       = {{Mysten Labs}},
  year         = {2022},
  howpublished = {\url{https://github.com/MystenLabs/sui/blob/main/doc/paper/sui.pdf}},
  note         = {Accessed 2025-11}
}

@misc{solanaTurbine,
  title        = {Solana Turbine: Block Propagation Protocol},
  author       = {{Solana Labs}},
  year         = {2023},
  howpublished = {\url{https://docs.solana.com/cluster/turbine}},
  note         = {Accessed 2025-11}
}

@misc{albassam2021lazyledger,
  title = {LazyLedger: A Distributed Data Availability Ledger With Client-Side Smart Contracts},
  author = {Mustafa Al-Bassam},
  year = {2019},
  note={\url{https://arxiv.org/abs/1905.09274}}, 
}

@inproceedings{chan2023simplex,
  title={Simplex consensus: A simple and fast consensus protocol},
  author={Chan, Benjamin Y and Pass, Rafael},
  booktitle={TCC},
  pages={452--479},
  year={2023},
}

@misc{abraham2025latency,
  author       = {Ittai Abraham and Yuval Efron and Ling Ren},
  title        = {The Latency Cost of Censorship Resistance},
  year         = {2025},
  howpublished = {Manuscript},
  note         = {November 2025}
}

@inproceedings{cachin2005avid,
  author    = {Christian Cachin and Stefano Tessaro},
  title     = {Asynchronous Verifiable Information Dispersal},
  booktitle = {Proceedings of the 24th IEEE Symposium on Reliable Distributed Systems (SRDS)},
  year      = {2005},
  pages     = {191--201},
  publisher = {IEEE}
}

@article{garimidi2025mcp,
  author       = {Pranav Garimidi and Joachim Neu and Max Resnick},
  title        = {Multiple Concurrent Proposers: Why and How},
  journal      = {CoRR},
  volume       = {abs/2509.23984},
  year         = {2025},
  url          = {https://arxiv.org/abs/2509.23984}
}

@inproceedings{shoup2023sing,
  title={Sing a song of Simplex},
  author={Shoup, Victor},
  booktitle = {DISC},
  year={2024},
  pages = {37:1–37:22},
}

@article{goren2025shelby,
  title={Shelby: Decentralized Storage Designed to Serve},
  author={Goren, Guy and Hariri, Andrew and Hartley, Timothy DR and Kappiyoor, Ravi and Spiegelman, Alexander and Zmick, David},
  journal={arXiv preprint arXiv:2506.19233},
  year={2025}
}

@article{danezis2025walrus,
  title={Walrus: An Efficient Decentralized Storage Network},
  author={Danezis, George and Giuliari, Giacomo and Kogias, Eleftherios Kokoris and Legner, Markus and Smith, Jean-Pierre and Sonnino, Alberto and W{\"u}st, Karl},
  journal={arXiv preprint arXiv:2505.05370},
  year={2025}
}

@misc{kniep2025solana,
  title={Solana alpenglow consensus},
  author={Kniep, Quentin and Sliwinski, Jakub and Wattenhofer, Roger},
  year={2025}
}

@article{stathakopoulou2019mir,
  title={Mir-{BFT}: High-throughput {BFT} for blockchains},
  author={Stathakopoulou, Chrysoula and David, Tudor and Vukolic, Marko},
  journal={arXiv preprint arXiv:1906.05552},
  volume={92},
  year={2019}
}

@inproceedings{giridharan2024autobahn,
  title={Autobahn: Seamless high speed {BFT}},
  author={Giridharan, Neil and Suri-Payer, Florian and Abraham, Ittai and Alvisi, Lorenzo and Crooks, Natacha},
  booktitle={SOSP},
  pages={1--23},
  year={2024}
}

@misc{eip4844,
  title        = {EIP-4844: Shard Blob Transactions (Proto-Danksharding)},
  author       = {Dankrad Feist and others},
  year         = {2023},
  howpublished = {\url{https://eips.ethereum.org/EIPS/eip-4844}},
  note         = {Accessed 2025-11}
}

@misc{buterin2020eth2,
  title        = {Ethereum 2.0 Phase 0/1/2 Roadmap (Gasper overview)},
  author       = {Vitalik Buterin},
  year         = {2020},
  howpublished = {\url{https://ethereum.org/}},
  note         = {Accessed 2025-11}
}

@inproceedings{gabizon2018hotstuff,
  title        = {HotStuff: {BFT} Consensus in the Lens of Blockchain},
  author       = {Maofan Yin and Dahlia Malkhi and Michael K. Reiter and Guy Golan-Gueta and Ittai Abraham},
  booktitle    = {PODC},
  year         = {2019},
  pages = {347--356}
}

@misc{sui-code,
    title = {"Sui"},
    author = {{Mysten Labs}},
    howpublished = {\url{https://github.com/mystenlabs/sui}},
    year = {2025}
}
    
    \end{document}